\documentclass[envcountsame]{llncs}

\usepackage{amsmath}
\usepackage{amssymb}
\usepackage{xcolor}
\usepackage[T1]{fontenc}
\usepackage{hyperref}
\usepackage[capitalize]{cleveref}
\usepackage[normalem]{ulem}

\usepackage[maxbibnames=10,
backref=true
]{biblatex}

\newif\ifnotes\notestrue

\newcommand{\WEEnc}{\mathsf{WE.Enc}}
\newcommand{\WEDec}{\mathsf{WE.Dec}}
\newcommand{\WESec}{\mathsf{WE.Sec}}
\newcommand{\Obf}{\mathsf{Obf}}
\newcommand{\Eval}{\mathsf{Eval}}
\newcommand{\SampYes}{\mathsf{SampYes}}
\newcommand{\SampNo}{\mathsf{SampNo}}
\newcommand{\negl}{\mathrm{negl}}
\newcommand{\QMA}{\mathsf{QMA}}
\newcommand{\QCIP}{\mathsf{QCIP}}
\newcommand{\QPT}{\mathsf{QPT}}
\newcommand{\bL}{\mathbf{L}}
\newcommand{\bLb}{\overline{\mathbf{L}}}
\newcommand{\DecYes}{\mathsf{DecYes}}
\newcommand{\DecNo}{\mathsf{DecNo}}
\newcommand{\ct}{\mathit{ct}}
\newcommand{\Bin}{\mathrm{Bin}}
\newcommand{\val}{\mathrm{val}}
\newcommand{\Ex}{\mathop{\mathbb{E}}}
\DeclareMathOperator*{\argmax}{arg\,max}
\newcommand{\Ideal}{\mathsf{Ideal}}

\newcommand{\Gap}{\textsf{QMA}\text{--}\textsf{QCIP}[2]}

\newcommand{\mor}[1]{{\color{red}[\textbf{Tomoyuki}: #1]}}
\newcommand{\moh}[1]{{\color{blue}[\textbf{Mohammed}: #1]}}

\begin{document}

\title{Separating Quantum Indistinguishability Obfuscation from Falsifiable Assumptions}
\author{Mohammed Barhoush\inst{1} \and Tomoyuki Morimae\inst{2} \and Ramis Movassagh\inst{1}}
\institute{IonQ Inc.\and Kyoto University}
\maketitle
\begin{abstract}
Quantum indistinguishability obfuscation (\textsf{qIO}) 
aims to make a quantum circuit unintelligible
while preserving its functionality.
It serves as a foundational primitive for advanced applications, such as witness encryption (\textsf{WE}) for $\QMA$, 
non-interactive zero-knowledge arguments for
$\QMA$, and attribute-based encryption for \textsf{BQP}. Despite its importance, constructing \textsf{qIO} from standard assumptions remains a major open problem.

In this work, we prove that the security of \textsf{WE} for \textsf{QMA} cannot be based on any falsifiable cryptographic assumption via a restricted class of quantum black-box reductions. Because \textsf{qIO} for null quantum circuits implies \textsf{WE} for \textsf{QMA}, this also separates null--\textsf{qIO} from falsifiable assumptions. Since almost all standard cryptographic assumptions are falsifiable, our result presents a barrier to basing 
\textsf{qIO} on standard cryptographic assumptions.

The reductions we rule out are restricted: the reduction must query the adversary classically, non-adaptively, at the same security parameter, and only on honestly generated ciphertexts. Moreover, our impossibility applies only to \textsf{WE} with classical ciphertexts, and therefore does not rule out \textsf{qIO} with obfuscators whose output is a quantum state. Ruling out more general reductions, as well as more general forms of \textsf{WE} and \textsf{qIO}, remains open.% On the other hand, unlike previous separations, our result rules out reductions even to falsifiable assumptions that are secure only against uniform adversaries.

Our impossibility relies on the existence of a $\Gap$ gap problem, an average-case assumption postulating a \textsf{QMA} language that cannot be verified with two messages of classical communication. 

\end{abstract}

%\tableofcontents

\newpage
\section{Introduction}\label{sec:intro}
\subsection{Background}

Program obfuscation aims to make code unintelligible while preserving its functionality. The strongest formalization, virtual black-box (\textsf{VBB}) obfuscation, is famously impossible for general circuits~\cite{BGI01}. This impossibility led Barak et al.~\cite{BGI01} to propose the weaker notion of \emph{indistinguishability obfuscation} (\textsf{IO}), which only requires that obfuscations of two functionally equivalent circuits be computationally indistinguishable. Despite this seemingly modest guarantee, classical \textsf{IO} has proven to be a central hub of modern cryptography~\cite{GGHRSW13,SW14}. A long line of work recently culminated in a construction of classical \textsf{IO} from well-founded assumptions~\cite{JLS21}. However, 
one of the underlying assumptions (on bilinear maps) is broken by quantum attacks, and no post-quantum instantiation from well-founded assumptions is known.
 
As quantum computers mature, it is natural to ask for obfuscation of \emph{quantum} programs. Beyond simply protecting quantum software, quantum \emph{indistinguishability obfuscation (\textsf{qIO})} acts as a master primitive for quantum-native cryptography. Even in its weakest useful form---obfuscation of \emph{null} quantum circuits, which reject every input---Bartusek and Malavolta~\cite{BM21} demonstrated that \textsf{qIO} implies witness 
encryption (\textsf{WE}) for $\QMA$, a publicly verifiable non-interactive zero-knowledge argument system for $\QMA$, and attribute-based encryption for $\mathsf{BQP}$. 

 In the $\QMA$ setting, the statement is classical (e.g., ``this local Hamiltonian has a low-energy state''), and the witness is a quantum state. Witness encryption---first introduced for \textsf{NP} by Garg et al.~\cite{GGSW13}---for $\QMA$ is particularly relevant to this work so we give an informal definition here. 

\begin{definition}[Witness Encryption for $\QMA$ (Informal)]
Let $L\in\QMA$. A \textsf{WE}
scheme $(\WEEnc,\WEDec)$ for $L$ consists of the following quantum polynomial time algorithms:
\begin{itemize}
  \item $\WEEnc(1^n,x,m)$: {Takes as input a statement $x$, security parameter $n$, and message $m\in \{0,1\}$. It then outputs a classical ciphertext $\ct$. }
  \item $\WEDec(x,\ct,|w_x\rangle)$: Takes as input a statement $x$, a classical ciphertext $\ct$, and a quantum witness $|w_x\rangle$ of $x$. It then outputs a message $\widetilde{m}\in \{0,1\}$. 
\end{itemize}
\end{definition}

\textsf{WE} allows anyone to encrypt a message by using a statement $x\in L$ such that any witness for $x$ decrypts it, while security requires that encryptions using a false statement $x\notin L$ computationally hide the message. The contrast with classical \textsf{NP} is instructive: \textsf{WE} for \textsf{NP} follows from classical \textsf{IO}~\cite{GGSW13,SW14} and hence, via~\cite{JLS21}, from falsifiable assumptions. A falsifiable assumption~\cite{Naor03,GW11} 
is one that can be modeled as an interactive game between an efficient challenger and an adversary, where the challenger efficiently decides whether the adversary won (\cref{def:falsifiable}). The vast majority of cryptographic assumptions---one-way functions, \textsf{LWE}, \textsf{DDH}, and so on---are falsifiable. Whether \textsf{WE} for $\QMA$ admits any foundation 
in falsifiable assumptions has remained completely open.
%Furthermore, direct lattice-based constructions for \textsf{WE} for \textsf{NP} are known from evasive \textsf{LWE}~\cite{VWW22,Tsa22}.

More generally, the state of \textsf{qIO} constructions is unsettled, primarily because they have yet to be realized in the plain model
with a security proof from standard assumptions.
For example, the null--\textsf{qIO} scheme of~\cite{BM21} is proven secure relative to a classical oracle.
Other \textsf{qIO} variants face the same barrier: obfuscators for pseudo-deterministic quantum circuits~\cite{BKNY23,BBV24} 
and for unitary programs \cite{HT25} are likewise proven secure only within an idealized classical-oracle model. Most recently, \cite{EPRINT:BGMS26}
constructed classical obfuscation of pseudo-deterministic quantum
circuits, yet their security proof is again given only relative to a classical oracle.
On the negative side, while quantum \textsf{VBB} obfuscation is known to be impossible~\cite{AF16,ABDS21}, these impossibility results do not apply to \textsf{qIO}.

In short, after a decade of sustained effort, we neither have a construction of \textsf{qIO} in the plain model with a security
proof from standard assumptions, nor an explanation for its absence. This state of affairs motivates the central question of this work:

\begin{center}
\emph{Can we build \textsf{qIO} (or even just \textsf{WE} for $\QMA$) from standard assumptions?}
\end{center}

\subsection{Our Results}

We provide a negative answer to this question for a natural class of security proofs.
We introduce a new assumption, the existence of a $\Gap$ gap problem, and by assuming it,
we prove that the security of \textsf{WE} for \textsf{QMA} cannot be based on any falsifiable cryptographic assumption via a restricted class of quantum black-box reductions. Since
null--\textsf{qIO} implies \textsf{WE} for \textsf{QMA}, the same separation
applies to null--\textsf{qIO}.

Note that \textsf{WE} and \textsf{qIO} are themselves non-falsifiable: verifying that an adversary has won the security game requires deciding whether its chosen statement lies outside the language, or whether the two quantum circuits it produced are both null, which cannot be done efficiently. 
Consequently, one cannot hope to rule out reductions from all assumptions: any primitive can be trivially based on the (non-falsifiable) assumption that the primitive itself is secure. Following Naor~\cite{Naor03} and Gentry–Wichs~\cite{GW11}, restricting attention to falsifiable assumptions is therefore the standard way to obtain meaningful impossibility results for non-falsifiable primitives~\cite{GW11,BMNY26,ADSS24}.

\if0
\textsf{WE} and \textsf{qIO} themselves are \emph{not} falsifiable: deciding whether an adversary's statement lies outside a given language, or whether two quantum circuits are functionally equivalent, is intractable. Therefore, separating \textsf{WE} or \textsf{qIO} from \emph{non}-falsifiable assumptions is impossible\mor{I do not understand this text.}\moh{better?}, as the primitive itself is an assumption of that kind. Instead, falsifiable assumptions are the standard target for impossibility results of such non-falsifiable primitives\mor{what do you mean by "such primitives"? I do not understand this text.}\moh{better?}, as established in previous literature~\cite{GW11,ADSS24,BMNY26}. 
\fi

The main result of this paper is the following:

\begin{theorem}[Main result, informal]\label{thm:informalmain}
Let $L$ be a $\QMA$ language admitting a $\Gap$ gap problem, and let $\Pi$ be a \textsf{WE} scheme for $L$. Then for every falsifiable assumption, either the assumption is false, or there is no quantum black-box reduction establishing the security of $\Pi$ from it, provided the 
queries made by the reduction algorithm to the adversary
are security-parameter-preserving, non-adaptive, {classical}, and
only honestly generated ciphertexts.
\end{theorem}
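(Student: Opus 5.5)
The plan is to adapt the Gentry--Wichs meta-reduction paradigm: build an inefficient ``simulatable'' adversary $\mathcal{A}$ that breaks $\Pi$, and then show that the reduction's interaction with $\mathcal{A}$ can be simulated efficiently. That simulation would break the falsifiable assumption, so the assumption is false. I would take the $\Gap$ gap problem to provide two efficiently samplable distributions: $\SampYes$, outputting instances $x\in L$, and $\SampNo$, outputting $x\notin L$. These distributions are computationally indistinguishable, and moreover no two-message classical protocol (a classical challenge followed by a classical answer) can verify membership. In particular, no efficient classical-message verifier can distinguish a prover holding a witness from one without.

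\textbf{Step 1: the inefficient adversary.} Define $\mathcal{A}$ as follows. It samples $x\leftarrow\SampNo(1^n)$ and outputs $x$ as its challenge statement. On receiving a classical ciphertext $\ct$, it decrypts by brute force (inefficiently computing the bit $m$ that maximizes the decryption-like acceptance probability) and outputs $m$. Correctness of $\Pi$ on yes-instances, together with the fact that a ciphertext is a classical string, should let $\mathcal{A}$ win with non-negligible advantage. If $\mathcal{A}$ did not win, encryptions under no-instances would already be hiding in an information-theoretic sense in a way that violates the gap. Since the reduction $R$ proves security, $R^{\mathcal{A}}$ breaks the falsifiable assumption with non-negligible advantage.

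\textbf{Step 2: the efficient simulator.} Define $\mathcal{S}$ by sampling $(x,w)$ with $x\leftarrow\SampYes$, together with polynomially many copies of a witness $|w_x\rangle$ (I assume the gap problem samples yes-instances jointly with witnesses). $\mathcal{S}$ outputs $x$ and answers each ciphertext query by running $\WEDec(x,\ct,|w_x\rangle)$. Because the queries are non-adaptive, classical, at the same security parameter, and consist only of honestly generated ciphertexts $\WEEnc(1^n,x,m)$, correctness guarantees that $\mathcal{S}$ returns $m$ with overwhelming probability. The same holds for $\mathcal{A}$'s answers, provided that its brute-force rule returns the encrypted message on honest ciphertexts.

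\textbf{Step 3: indistinguishability.} It remains to show that $R^{\mathcal{A}}$ and $R^{\mathcal{S}}$ behave the same in the falsifiable game, so that $R^{\mathcal{S}}$ is an efficient attacker. Since both oracles answer honest ciphertexts with the encrypted plaintext, the only difference is whether $x$ comes from $\SampNo$ or $\SampYes$. A distinguisher between the two worlds is a procedure that sends $x$, receives non-adaptive classical ciphertexts whose plaintexts it knows, and checks the answers. I would recast this as a two-message classical verification protocol for $x$: the verifier sends the batch of ciphertexts and the prover returns the plaintexts. The falsifiable challenger's efficient win-check then acts as the verifier's decision. A noticeable gap between the two worlds would thus give a $\QCIP[2]$-style verifier for the gap problem, which contradicts the assumption.

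The main obstacle is this last step, together with consistency in Step 1. The hard part is making $\mathcal{A}$'s answers on honest ciphertexts for a no-instance coincide with the true plaintexts, while $\mathcal{A}$ is still a genuine breaker. I would define $\mathcal{A}$ as the unbounded prover's optimal strategy in the induced two-message protocol. I would then argue, by soundness of the gap assumption, that its success probability on no-instances is statistically close to the honest value on yes-instances. This also explains why the restrictions are needed. Adaptivity would give more than two messages, quantum queries would break the classical-communication form, and malformed ciphertexts could not be answered consistently.
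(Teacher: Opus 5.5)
Your skeleton matches the paper's: $\overline P$ samples with $\SampNo$ and decrypts inefficiently, $P$ samples with $\SampYes$ and runs $\WEDec$, and both are compared with the ``ideal'' run in which the reduction's ciphertexts are answered by their true plaintexts. Step~3, however, does not work as stated. Your verifier runs the $C$--$\Sigma$ interaction, sends the reduction's ciphertexts, and accepts according to $C$'s decision. A $\QCIP[2]$ oracle evaluates this against the \emph{optimal} prover. That prover has no incentive to decrypt; it answers whatever maximizes $C$'s acceptance. So the oracle's answer says nothing about $\iota(x)$, the acceptance probability when the answers are the true plaintexts. Having the verifier ``know the plaintexts and check them'' does not rescue this. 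The real ciphertexts may be generated by $C$ or jointly, and the paper does not assume their plaintexts are efficiently accessible. (If they were, no oracle would be needed: the ideal experiment would itself be an efficient distinguisher of $\bL$ from $\bLb$.)

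The missing idea is the hidden-test protocol. For each $x_i$, add $K/2$ honest encryptions of $0$ and $K/2$ of $1$, shuffle them with the real ciphertexts, and reject if any test answer is wrong. Real and test ciphertexts with the same plaintext are identically distributed; this is exactly where the honest-generation restriction is used. Hence a prover that errs in a block passes all of that block's tests with probability at most $2\ell/K$. This gives $\iota(x)-2sK\gamma\le\val\le\iota(x)+2s\ell/K$ whenever every $x_i$ has Bayes error at most $\gamma$.

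Also, ``a noticeable gap would yield a $\QCIP[2]$ verifier'' is not how the assumption can be used. A noticeable difference in \emph{average} acceptance probability gives no per-instance $2/3$ vs.\ $1/3$ separation. The assumption says only that no QPT algorithm \emph{with a $\QCIP[2]$ oracle} distinguishes $\bL^{\otimes s}$ from $\bLb^{\otimes s}$. The paper therefore queries the decision oracle on $M$-fold parallel repetitions with thresholds $j\eta$, estimates $\val$ to within $2\eta$, and applies the assumption to this estimator.

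Step~1 is only gestured at, yet Step~3 relies on it when you say ``both oracles answer honest ciphertexts with the encrypted plaintext.'' What is needed is a fixed function independent of $\Sigma$: the Bayes decryptor $S^*(x,\ct)=\argmax_b\Pr[\WEEnc(1^n,x,b)=\ct]$. One must show its Bayes error satisfies $\Delta(x)\le\gamma$ for all but a negligible fraction of $x\leftarrow\bLb_n$. The paper uses the protocol ``encrypt $r$ random bits under $x$; accept iff at most $\gamma r/2$ guesses are wrong.'' Its value is exactly $\Pr[\Bin(r,\Delta(x))\le\gamma r/2]$. Correctness together with optimality of $S^*$ gives $\Delta(x)\le\nu$ on every yes-instance. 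Hence a single oracle query would distinguish the two distributions unless bad no-instances have negligible mass. Your fallback of defining $\mathcal A$ as ``the optimal prover of the induced two-message protocol'' is unsafe. If that protocol uses $C$'s win-check, the optimal prover need not decrypt and need not be a $\Pi$-adversary, so the reduction's guarantee would not apply. Finally, note that $\SampNo$ is not assumed to be efficient.
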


Because null--\textsf{qIO} implies \textsf{WE} for $\QMA$~\cite{BM21}, we obtain the following corollary:

\begin{corollary}[Informal]\label{cor:informalqio}
Assume that there exists a $\mathsf{QMA}$ language admitting a $\Gap$
gap problem.
Then for every falsifiable assumption, either the assumption is false, or
there is no quantum black-box reduction establishing the security of any
null--\textsf{qIO} from it, subject
to the same restrictions on the reduction as in \cref{thm:informalmain}.
\end{corollary}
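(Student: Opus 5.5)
The corollary follows by composing \cref{thm:informalmain} with the null--\textsf{qIO}$\Rightarrow$\textsf{WE} transformation of~\cite{BM21}. The transformation must preserve both the restricted reduction class and black-box access to the adversary. Fix a $\QMA$ language $L$ admitting a $\Gap$ gap problem, and suppose toward contradiction that some null--\textsf{qIO} scheme $\mathcal{O}$ has a restricted quantum black-box reduction $R$ from a true falsifiable assumption $\mathcal{A}$.

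The first step is to build the \textsf{WE} scheme $\Pi_{\mathcal{O}}$ for $L$ as in~\cite{BM21}. On input $(1^n,x,m)$, the encryption algorithm obfuscates the quantum circuit $C_{x,m}$, which on input a witness state runs the $\QMA$ verifier for $x$ and outputs $m$ if it accepts (and $\bot$ otherwise). Decryption evaluates the obfuscated circuit on the witness. When $x\notin L$ is a \textsf{NO} instance, soundness amplification makes $C_{x,0}$ and $C_{x,1}$ both (close to) the null circuit. Hence an adversary $B$ breaking \textsf{WE} security on some $x\notin L$ gives an adversary $B'$ breaking null--\textsf{qIO} security on the pair $(C_{x,0},C_{x,1})$. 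Here $B'$ simply forwards its challenge obfuscation as a ciphertext to $B$.

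The second step is to lift $R$ to a reduction $R'$ for $\Pi_{\mathcal{O}}$. Given oracle access to a \textsf{WE} adversary $B$, $R'$ runs $R$ and answers each \textsf{qIO} query through the wrapper $B'$. We then check that $R'$ stays in the restricted class:
\begin{itemize}
\item Each query of $R$ to $B'$ is an obfuscation, which is exactly a ciphertext query to $B$. So the queries remain non-adaptive and at the same security parameter.
\item They remain classical, provided the obfuscator output is classical. This is why the corollary, like the theorem, only covers obfuscators with classical output.
\item The ``honestly generated'' condition carries over because an honest obfuscation of $C_{x,m}$ is literally an honest ciphertext.
\end{itemize}
Thus $R'$ reduces the security of $\Pi_{\mathcal{O}}$ to $\mathcal{A}$ within the class excluded by \cref{thm:informalmain}. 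The theorem then forces $\mathcal{A}$ to be false, a contradiction.

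The main obstacle is making the restrictions match exactly. A general \textsf{qIO} reduction may query the adversary on pairs of null circuits of arbitrary form, not just pairs of the shape $(C_{x,0},C_{x,1})$. I would handle this by stating the restriction in the corollary in \textsf{WE}-form, as the phrase ``the same restrictions'' indicates. That is, the reduction may only query honest obfuscations of circuits $C_{x,m}$ arising from the \textsf{WE} construction, so every query is an honest ciphertext. I would also verify that the negligible non-nullness of $C_{x,m}$ for $x\notin L$ (from the verifier's soundness error) is absorbed into the security loss, perhaps by using a gap version of the language so that \textsf{NO} instances have exponentially small acceptance probability.
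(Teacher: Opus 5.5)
Your proposal is correct and follows the paper's route: \cref{thm:qio-sep} comes from \cref{prop:construction} and \cref{thm:imposs} via exactly your wrapper ($B'_1$ outputs $(Q^{x}_0,Q^{x}_1,x)$ for $x\leftarrow B_1(1^n)$, and $B'_2$ forwards the obfuscation to $B_2$ as a ciphertext), which turns a restricted null--\textsf{qIO} reduction into a restricted \textsf{WE}--reduction for $\Pi[\mathcal{O}]$ with no loss in advantage. The ``obstacle'' you raise does not actually arise once the \textsf{qIO} game is stated as in \cref{def:qio-game}, with the adversary's first stage choosing the circuit pair and the reduction only honestly obfuscating that pair (\cref{def:restricted-qio-reduction}): every query to the wrapper is then automatically an honest \textsf{WE} ciphertext, so no construction-specific restriction is needed, and soundness amplification already makes $Q^{x}_1$ null in the paper's negligible-error sense.
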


Note that \cref{thm:informalmain} only covers \textsf{WE} with classical ciphertexts.
Hence, \cref{cor:informalqio} does not rule out obfuscators that output
quantum states, such as the obfuscators for pseudo-deterministic quantum
circuits~\cite{BKNY23,BBV24} and for unitary
programs~\cite{HT25}. (See also \cref{sec:limitations}).

Our separation assumes the existence of a $\Gap$ gap problem\footnote{$\QCIP$ is the complexity class for interactive protocols over classical
channels~\cite{BLW,GK25}, and $\QCIP[2]$ is its two-message version.}, defined as follows:

\begin{definition}[$\Gap$ gap problem, informal]
A $\Gap$ gap problem for a language $L\in\QMA$ consists of a distribution over yes-instances that is efficiently samplable together with the (quantum) witnesses, and a distribution over no-instances that may be sampled inefficiently, such that no efficient quantum algorithm can distinguish the two distributions, even when given classical access to an oracle that decides every two-message interactive proof over a classical channel.
\end{definition}

Informally, this assumption is an average-case formulation of the statement that some $\QMA$ languages cannot be verified via two messages of classical communication. This is well supported by the current cryptographic landscape, primarily stemming from the known barriers to achieving two-message classical verification of $\QMA$. Mahadev's breakthrough classical verification of quantum computation protocol~\cite{Mah18} requires four messages. While a long line of subsequent works has refined and optimized various aspects of her original scheme~\cite{GV19, Z22, NZ23, BKM+25, ACGH20, CCY20, B21}, four messages remain the state of the art in the plain model \footnote{Although a non-interactive scheme has been achieved in the random oracle model~\cite{ACGH20}, it does not apply to the plain model.} 

Another circumstantial support
is the recent black-box separation by Barhoush, Morimae, Nishimaki, and Yamakawa~\cite{BMNY26}, which proves that non-interactive classical verification of $\QMA$ (technically, a two-message protocol with a one-message setup) cannot be based on falsifiable assumptions. This provides a negative result reinforcing the conjecture that $\QMA \not\subseteq \QCIP[2]$.\footnote{The result \cite{BMNY26} is for classical verifiers, but the result can be actually extended to the quantum verifier case.} Our notion of $\Gap$ gap problem is directly inspired by the $\QMA$--$\mathsf{QCMA}$ gap problem introduced there; however, the choice to strengthen the assumption by replacing $\mathsf{QCMA}$ with $\QCIP[2]$ in the gap problem is motivated by the interactive nature of the \textsf{WE}  primitive, as we detail in Section~\ref{sec:overview}.

Ultimately, our theorem should be read as a structural barrier: if one believes a falsifiable assumption is true, then no security proof of the stated form can exist to base \textsf{WE} for $\QMA$ on it. The four constraints on the reduction---non-adaptive access, preservation of the security parameter, {classical queries,} and honestly generated ciphertext queries (\cref{def:reduction,def:wered})---are natural restrictions discussed at length in 
\cref{sec:limitations,sec:overview}.

Finally, two features distinguish our separation from prior works in the simulatable-adversary
paradigm~\cite{GW11, ADSS24, BMNY26}. First, ours is the
first separation in this paradigm to apply to an interactive security
game: the previous ones all target a one-shot proof system, in which
the adversary’s attack consists of producing a single message.
Meta-reduction-based impossibilities do address interactive security
notions, but they concern different assumption classes and impose
different structural restrictions on the scheme and on the reduction;
see \cref{sec:related-works}. We hope our techniques open the door to
separating further interactive security notions from all falsifiable
assumptions, where one-shot techniques break down. Second, we rule out reductions even to
falsifiable assumptions that are secure only against \emph{uniform} adversaries.
Previous separations~\cite{GW11, ADSS24, BMNY26} are proven for assumptions formulated against non-uniform adversaries, and thus say nothing about assumptions whose
security is inherently uniform. An example of such an assumption is a keyless collision-resistant hash function, which can only be secure against uniform
adversaries, since a non-uniform adversary can simply hardwire a fixed collision
as advice.

\subsection{Related Works}\label{sec:related-works}

\paragraph{Separations from falsifiable assumptions.} Gentry and
Wichs~\cite{GW11} proved that (adaptively-sound) succinct non-interactive
arguments (\textsf{SNARG}s) admit no black-box reduction to any falsifiable
assumption (under subexponential hardness of some subset-membership problems),
introducing the simulatable-adversary paradigm that underlies all subsequent
work in this line, including ours. Alagic, Dachman-Soled, Shingane, and
Struck~\cite{ADSS24} extended the \textsf{SNARG} separation to \emph{quantum}
black-box reductions. Closest to our work, Barhoush, Morimae, Nishimaki, and
Yamakawa~\cite{BMNY26} separated non-interactive classical verification of
$\QMA$ from all falsifiable assumptions, assuming a $\QMA$--$\mathsf{QCMA}$ gap
problem. Their result concerns a two-message argument system, where the
adversary's attack consists of producing a single classical message. Ours
concerns an encryption scheme, whose security game is interactive with two
classical messages produced by the adversary. This difference drives most of the
technical divergence between the two works (see Section~\ref{sec:overview}).

\paragraph{Meta-reductions.} A complementary technique for ruling out reductions for interactive security notions is the meta-reduction paradigm~\cite{FS10,Pas11}, which turns the reduction itself into an efficient attack on the assumption, typically by
restricting its rewinding behavior. Because
the attack produced by a meta-reduction uses no advice, the technique
also applies to assumptions that are secure only against uniform
adversaries. On the other hand, the known meta-reduction
impossibilities either concern restricted assumption classes, such as
non-interactive assumptions~\cite{FS10}, or rely on structural
properties of the scheme under attack, such as special soundness and
unique witnesses~\cite{Pas11}. The simulatable-adversary
paradigm instead exhibits an inefficient adversary that is
indistinguishable from an efficient simulator under the reduction’s
query access, and it applies to arbitrary falsifiable assumptions. The
two approaches impose different restrictions and apply to different
regimes.

\paragraph{Negative results on witness encryption.} The only prior negative
result targeting witness encryption itself concerns a stronger variant. Garg,
Gentry, Halevi, and Wichs~\cite{GGHW14} show that \emph{extractable} \textsf{WE} for
$\mathsf{NP}$ with auxiliary input is implausible, under a special-purpose
obfuscation conjecture. Their result rules out the primitive (conditionally),
whereas ours concerns plain \textsf{WE} for $\QMA$ and rules out a class of
security proofs from falsifiable assumptions. The two results are complementary,
and neither subsumes the other.

\paragraph{Constructions of \textsf{WE} from non-falsifiable assumptions.}
Witness encryption for $\mathsf{NP}$ and null--\textsf{IO} for classical circuits
can be built from evasive \textsf{LWE}~\cite{VWW22,Tsa22}, a plausibly
post-quantum assumption. Evasive \textsf{LWE} is non-falsifiable: its statement
quantifies over auxiliary-input samplers and admits no efficient challenger. 

\paragraph{Black-box separations in quantum cryptography.} A growing body of work
maps the black-box landscape of quantum primitives: oracle separations between
pseudorandom unitaries and one-way functions~\cite{Kre21}, quantum extensions of
classical separations~\cite{HY20,ACC22,CLM23}, and separations among quantum
primitives~\cite{CM24,CCS24}. These results separate primitives from primitives. In
contrast, we separate a primitive from an entire
class of falsifiable assumptions.

\paragraph{Impossibility of black-box obfuscation.}
\textsf{VBB} obfuscation is known to be impossible for general classical
circuits against classical adversaries~\cite{BGI01}. Alagic and
Fefferman~\cite{AF16} initiated the study of quantum obfuscation and
proved, assuming quantum-secure one-way functions, that quantum
circuits cannot be \textsf{VBB} obfuscated into quantum states when the
adversary may hold multiple copies of the obfuscated state.
Subsequently, Alagic, Brakerski, Dulek, and Schaffner~\cite{ABDS21}
showed that, under the quantum hardness of LWE, even classical
circuits cannot be \textsf{VBB} obfuscated into quantum states, where a single
copy of the obfuscated state suffices for the impossibility. Negative
results are also known for indistinguishability obfuscation of
classical circuits. Mahmoody, Mohammed, Nematihaji, Pass, and Shelat~\cite{MMN+16} rule out fully black-box constructions of \textsf{IO}
from one-way functions and from collision-resistant hashing, assuming
one-way functions exist and $\mathsf{NP} \not\subseteq \mathsf{coAM}$,
and Goldwasser and Rothblum~\cite{GR07} show that
statistically secure \textsf{IO} for all circuits would collapse the polynomial
hierarchy. These results separate \textsf{IO} from specific primitives or rule
out statistical security altogether, whereas we separate quantum \textsf{IO}
from an entire class of assumptions.

\subsection{Limitations and Open Problems}\label{sec:limitations}

\paragraph{Restrictions on the reduction.} Our impossibility result applies to
reductions that (1) preserve the security parameter, (2) access the adversary
non-adaptively and classically, and (3) query the adversary only on honestly
generated ciphertexts. We call such reductions \emph{restricted
\textsf{WE}--reductions} (Section~\ref{sec:bbred}). Restriction (3) reflects the
intuition that a valid attack on \textsf{WE} is the ability to break honestly
generated ciphertexts i.e. a ciphertext that was generated by running the honest encryption algorithm on a uniformly random bit. Hence,  a reduction ought to be able to leverage this
ability alone to attack the falsifiable assumption. However, removing any restriction is the main open problem
left by this work. As a
first step toward a full separation of \textsf{WE} for $\QMA$ from falsifiable
assumptions, we believe these restrictions are reasonable.

Prior works~\cite{GW11,ADSS24,BMNY26} can lift the parameter-preserving,
classical-access, and non-adaptivity restrictions by hardwiring the adversary's
behavior as non-uniform advice. As we explain in Section~\ref{sec:overview}, this
route is unavailable in our interactive setting. In exchange, our proof
dispenses with advice altogether: it rules out reductions even to falsifiable
assumptions whose security holds only against uniform adversaries, which is
unclear for the prior separations. We consider this a reasonable trade-off, but
lifting either restriction would be interesting.

There is a further consequence of our security formulation. Since the adversary produces the instance itself,
the reduction can never query the adversary on an instance of its own choosing.
As in previous separations in this line of work~\cite{GW11,ADSS24,BMNY26}, our separation thus concerns an adaptive-style security
notion, and does not address reductions that establish selective, per-instance
security.

\paragraph{\textsf{IO} for classical circuits.} Our techniques do not, and
provably cannot, extend to \textsf{IO} for classical circuits. The reason is
structural: \textsf{IO} for classical circuits implies \textsf{WE} for
$\mathsf{NP}$, and there is no analog of the $\Gap$ gap problem for
$\mathsf{NP}$---the classical witness is a one-message certificate, so a $\QCIP[2]$
oracle decides the language outright. In fact, classical \textsf{IO} (without quantum security) can be constructed
from falsifiable assumptions~\cite{JLS21}, and hence so can
\textsf{WE} for \textsf{NP}, so no separation can exist for the classical notion unless
one of the underlying assumptions is false.

A stronger notion is post-quantum \textsf{IO}, namely \textsf{IO} for classical
circuits that is quantumly secure. A separation for post-quantum null--\textsf{IO}
would face the same structural barrier, since null--\textsf{IO} for classical
circuits gives \textsf{WE} for $\mathsf{NP}$. Here, interestingly, no construction from
well-studied falsifiable assumptions is known either. The
known plausibly post-quantum constructions of \textsf{WE} for $\mathsf{NP}$ and
null--\textsf{IO} rely on evasive \textsf{LWE}~\cite{VWW22,Tsa22}, a non-falsifiable
assumption, or on new \textsf{LWE}-with-hints assumptions
introduced for
\textsf{IO}~\cite{BDGM20,GP21,WW21,HJL21},
which are falsifiable but ad hoc, and several of which have been
broken~\cite{HJL21}. Post-quantum null--\textsf{IO} for classical
circuits is thus a curious intermediate case: neither a construction
from standard assumptions nor a black-box separation is known, and our
technique provably cannot supply the latter.

The status of post-quantum \textsf{IO} for \emph{general} classical circuits is
open in both directions: neither a construction from standard assumptions nor a
black-box separation is currently known.
This is a particularly important direction for future research, as
general-circuit post-quantum \textsf{IO} serves as the primary known route to
primitives such as public-key quantum money and quantum software
copy-protection.  While our structural barrier implies that
proving such a separation would require techniques different from ours, we hope that the ideas in this work help in this direction. 

\paragraph{\textsf{qIO} with quantum output.} Notably, we only consider
obfuscators that output classical bit strings. Consequently, our impossibility
leaves untouched any obfuscator whose output is a \emph{quantum} state. This gap
is not vacuous, as schemes of precisely this excluded kind exist. The obfuscators
of Bartusek, Kitagawa, Nishimaki, and Yamakawa~\cite{BKNY23} and of Bartusek,
Brakerski, and Vaikuntanathan~\cite{BBV24} act on pseudo-deterministic quantum
circuits and yield obfuscated programs encoded as quantum states, while Huang and
Tang~\cite{HT25} obfuscate unitary programs, again into quantum states. Because
each of these produces an obfuscation with quantum output, all three fall beyond
the syntactic class considered here, and our result carries no implication for
them.

This exclusion is structural, not merely an artifact of our proof. With quantum
ciphertexts, we would need an oracle for two-message protocols over a \emph{quantum}
channel, that is, a $\mathsf{QIP}[2]$-type oracle. Since a quantum witness is
itself a one-message quantum certificate ($\QMA=\mathsf{QIP}[1]\subseteq
\mathsf{QIP}[2]$), such an oracle decides the language outright, so no gap problem
relative to it can exist. This mirrors the barrier for $\mathsf{NP}$ above.

\paragraph{Instantiating the assumption.} Our separation assumes the existence of
a $\Gap$ gap problem. It would be valuable to base this gap problem on
concrete quantum hardness assumptions, or at least to give an oracle separation
between $\QMA$ and $\QCIP[2]$. %However, note that our assumption is an average-case hardness rather than worst-case hardness, so this would only provide some motivation for the existence of this problem. 

The corresponding one-message question was recently settled: Bostanci,
Haferkamp, Nirkhe, and Zhandry~\cite{BHNZ25} give a classical oracle separating
$\QMA$ from $\mathsf{QCMA}$. This makes the oracle separation of $\QMA$ from
$\QCIP[2]$ the concrete next step, and it would be interesting to see whether their
oracle already defeats two-message classical-channel protocols. Exploring this
gap more concretely could weaken our assumption and strengthen the separation,
and would be of independent interest for quantum complexity theory.

\section{Technical Overview}\label{sec:overview}

\subsection{The Simulatable-Adversary Paradigm}\label{sec:paradigm}
Our impossibility result follows the simulatable-adversary paradigm of Gentry
and Wichs~\cite{GW11}. To rule out black-box reductions, one exhibits two
adversaries against the primitive: a computationally \emph{unbounded} adversary
$\overline{P}$ that genuinely breaks it, and an \emph{efficient} simulator $P$
that no efficient algorithm can distinguish from $\overline{P}$ under the query
access the reduction is permitted.

Why does such a pair suffice? Suppose a reduction $\Sigma$ bases the security of \textsf{WE} on a falsifiable assumption represented by a $\QPT$ challenger $C$. Informally, a falsifiable assumption is defined by an efficient game in which no efficient adversary should be able to make $C$ output $1$ with noticeable advantage. A black-box reduction $\Sigma$ is an algorithm that, given oracle access to an adversary breaking \textsf{WE}, can break the falsifiable assumption. Therefore, if $\overline{P}$ is a computationally unbounded adversary that breaks \textsf{WE}, then $C$ outputs $1$ with non-negligible advantage when interacting with $\Sigma^{\overline{P}}$. But $\Sigma $ is efficient, so replacing its oracle access $\overline{P}$ by
$P$ shifts its winning probability by at most a negligible amount. In other words, $\Sigma^{P}$ still wins the falsifiable assumption game with non-negligible advantage, and it is efficient, since $P$ is efficient. Therefore, this
shows that the assumption was already false since it can be broken with an efficient attack. In summary, either there does not exist such a reduction or the assumption is false. The entire separation therefore
reduces to two tasks: constructing the pair $(\overline{P},P)$, and proving that
no permitted  reduction tells them apart.

However, there are multiple challenges that come up when we try to apply the simulatable-adversary paradigm to our setting.

\subsection{Technical Challenges}

\paragraph{Challenge 1: \textsf{WE} security game is not appropriate.}
Before applying the simulatable-adversary paradigm, we first need to modify the \textsf{WE} security game. This is because the textbook definition of \textsf{WE} does not have a fundamental property of previous primitives that were separated using the simulatable-adversary paradigm. In particular, the adversary in previous separations \cite{GW11,BMNY26} gets to \emph{choose} the instance $x$ as part of the attack, which turns out to be crucial. We formulate a security notion for \textsf{WE} with this property. Our \textsf{WE} security game (Definition~\ref{def:weSec})
is \emph{interactive}: rather
than fixing an instance $x \notin L$ in advance and asking the adversary to
distinguish encryptions of $0$ and $1$ under it, we let the adversary
\emph{choose} its own instance. It outputs an instance $x$, receives a ciphertext encrypted under $x$, and wins if it
guesses the encrypted bit correctly while ensuring $x \notin L$ (see \cref{sec:we} for more details). Because the
adversary now carries the extra burden of sampling the no-instance itself, this
notion is weaker than the standard one, and ruling out reductions for a weaker
notion yields a \emph{stronger} impossibility result.

%In separations for argument systems such as \textsf{SNARG}s~\cite{GW11} or \textsf{CVQC}~\cite{BMNY26}, the target adversary is a \emph{one-shot} entity: given a statement, it immediately returns a proof. This lets the adversary be modeled as a memoryless map from a query to a response, a structural feature that simulation arguments rely on heavily. Witness encryption does not fit this template.

The price of this design is interactivity: the attack is no longer a static map
but a two-phase process---the adversary first commits to an instance, and only then answers a ciphertext. This structure
breaks the separation techniques of~\cite{GW11,ADSS24,BMNY26}, which lean throughout on
the non-interactive, one-shot nature of their adversaries, as we discuss in the next two challenges.

\paragraph{Challenge 2: Finding $\overline{P}$.}
The paradigm requires an unbounded adversary $\overline{P}$ that breaks the scheme while remaining indistinguishable from an efficient $P$. To show this indistinguishability, our strategy is to  \emph{reproduce the behavior of $\overline{P}$ and $P$ using a $\QCIP[2]$ oracle}. 
Here, a $\QCIP[2]$ oracle $O$ is defined as follows:
For $\QPT$ algorithms, $V_1$ and $V_2$, and a string $x$, we define
\begin{equation}
  \val_{V_1,V_2}(x)\ :=\ \max_{\{z_y\}_y}\ \Pr_{(\rho,y)\leftarrow V_1(x)}\big[V_2(x,\rho,y,z_y)=1\big],
\end{equation}
where $\rho$ is a quantum state.
Then a $\QCIP[2]$ oracle $O$ is defined by
\begin{equation}
  O(V_1,V_2,x)=
  \begin{cases}
    1 & \text{if } \val_{V_1,V_2}(x)\ \ge\ 2/3,\\[2pt]
    0 & \text{if } \val_{V_1,V_2}(x)\ \le\ 1/3.
  \end{cases}
\end{equation}
On inputs with $1/3<\val_{V_1,V_2}(x)<2/3$ the output of $O$ is unconstrained.
In other words, a $\QCIP[2]$ oracle can find the optimal prover of the $\QCIP[2]$ protocol specified by the verifier $(V_1,V_2)$, and
decide whether the optimal prover makes the verifier $V_2$ accept with high probability or low probability.

\if0
\begin{definition}[Informal $\QCIP[2]$]
A language $L$ is in $\QCIP[2]$ if
there exist $\QPT$ algorithms $V_1,V_2$ such that:
\begin{itemize}
  \item \emph{Completeness:} for every $x\in L$, there is a family $\{z_y\}$ of
        $\mathrm{poly}(|x|)$-length strings with
        \begin{equation}
          \Pr_{(\rho,y)\leftarrow V_1(x)}[V_2(x,\rho,y,z_y)=1]\ \ge\ 2/3.
        \end{equation}
  \item \emph{Soundness:} for every $x\notin L$ and every such
        family $\{z_y\}$,
        \begin{equation}
          \Pr_{(\rho,y)\leftarrow V_1(x)}[V_2(x,\rho,y,z_y)=1]\ \le\ 1/3.
        \end{equation}
\end{itemize}
\end{definition}
\fi

The main difficulty is that it is unclear how to use a $\QCIP[2]$ oracle to simulate $P$ or $\overline{P}$. A $\QCIP[2]$ query returns the value achieved by an \emph{optimal} prover depending on the reduction and will not in general act as $P$ or $\overline{P}$. 

Previous works avoid this issue by taking a union bound over all possible prover strategies; see \cite{GW11,BMNY26,ADSS24}. The details of their argument are not important but essentially, in their non-interactive setting, a prover gives only a single polynomial-length answer. Hence, there are at most singly exponentially many strategies. In our setting, the prover must answer every possible ciphertext, so there are $2^{2^{\textsf{poly}(n)}}$ such functions. This space is too large for the same arguments used in previous works.

We are therefore forced to take a different approach: we construct $\overline{P}$ explicitly, requiring it to simultaneously satisfy two conditions:
\begin{enumerate}
  \item it breaks the {\textsf{WE}} scheme; and
  \item it can be simulated by a $\QCIP[2]$ oracle.
\end{enumerate}
While achieving either property independently is straightforward, satisfying both simultaneously is the difficult part.

Our primary candidate is the \emph{Bayesian decryptor}: given a ciphertext, it outputs the most likely plaintext. This satisfies Condition~1 optimally, as no strategy can achieve higher inversion accuracy. 

However, it is not clear if this satisfies Condition~2. At first glance, there is no guarantee that the interaction between the reduction and the Bayesian decryptor can be simulated by a $\QCIP[2]$ oracle. The $\QCIP[2]$ oracle can be used to check for the existence of a prover that makes the {verifier} accept with $2/3$ probability. But this optimal prover has no inherent incentive to invert ciphertexts; it will simply adopt whatever behavior maximizes acceptance for the reduction. 

We resolve this by redesigning the protocol to force the prover's optimal strategy to be inversion. Among the ciphertexts produced by the reduction, we interleave additional \emph{planted ciphertexts} that we encrypt ourselves (and whose plaintexts we know). The verifier is instructed to reject instantly if any planted ciphertext is answered incorrectly. Under this mechanism, a prover cannot risk deviating from inversion: to pass the test, it must invert the planted ciphertexts, and because the planted ciphertexts are indistinguishable from the reduction's ciphertexts (by our assumption of honestly generated ciphertexts), it must invert all of them. Optimal play is thus forced to be Bayesian inversion, ensuring that the $\QCIP[2]$ oracle behaves exactly as $\overline{P}$. This requirement also highlights why the reduction must produce honestly generated ciphertexts: planted ciphertexts can only remain hidden if drawn from the exact same distribution.

\paragraph{Challenge 3: Our restrictions and advice.} 
Our separation is also limited to reductions that query the adversary on the same security parameter, non-adaptively and classically. In general, reductions may query their adversary on security parameters smaller than their own, adaptively, or quantumly. Prior separations~\cite{GW11,ADSS24,BMNY26} handle these situations by a sequence of hybrids that progressively strengthen the
indistinguishability of $P$ and $\overline{P}$: from a single sample, to
polynomially many samples, to adaptive quantum queries, and finally to adaptive
quantum queries across all security parameters. Each of these steps exploits the fact that the adversary's behavior can
be hardwired as non-uniform advice. However, in our setting, the adversary is an interactive map of exponential size, which prevents it from being compressed into non-uniform advice. 

We therefore restrict our reductions in such a way that the queries are
security-parameter preserving, non-adaptive, classical, and honestly-generated ciphertexts.
{Concretely, our restricted reduction $\Sigma$ interacts with its adversary as follows, for some polynomials $s$ and $\ell$ determined by the reduction. First, $\Sigma$ queries the first stage of the adversary $s$ times, receiving instances $x_1, \ldots, x_s$. Then, for each $i \in [s]$, $\ell$ ciphertexts $ct_{i,1}, \ldots, ct_{i,\ell}$ are generated by honestly running the encryption algorithm on $x_i$ and independently sampled uniform plaintext bits. Finally, $\Sigma$ queries the second stage of the adversary on the pairs $(x_i, ct_{i,j})$ and receives one plaintext guess for each. Thus, the queries are security-parameter-preserving, classical, non-adaptive, and made only on honestly generated ciphertexts.}

{However,} there is a compensating benefit for our approach.  As mentioned, the proofs in~\cite{GW11,ADSS24,BMNY26} rely on non-uniform advice across multiple steps, leaving it unclear whether their separations hold in a fully uniform setting. Our separation avoids this dependence, and therefore rules out black-box reductions even from falsifiable assumptions that are only uniformly secure.

% Rewritten versions of ``Barrier 2'' and ``Our Separation''.
% The numerical citations and theorem/lemma numbers follow the supplied draft.

\subsection{Our Separation}\label{sec:oursep}

Our separation is demonstrated in the following steps. 

\paragraph{Decryptability transfers to no-instances.}
Nothing in the definition of {\textsf{WE}} prevents $\WEEnc(1^n,x,\cdot)$ from becoming statistically lossy when $x \notin L$; in fact, lossiness on no-instances would naturally provide unconditional security. Our first main step (Lemma~\ref{lem:decoff}) rules out this possibility by showing 
that{, assuming the existence of a $\Gap$ problem,} ciphertexts generated with no-instances can be (inefficiently) decrypted and the plaintext message recovered.

This is the first instance where we can employ the $\QCIP[2]$ oracle using the following key observation:
\begin{center}
\emph{``My ciphertext was decrypted correctly'' can be expressed as a two-message protocol over a classical channel.}
\end{center}
To see this, consider a verifier that chooses $k$ random bits, encrypts each bit
under an instance $x$, and sends the resulting ciphertexts to a prover. The
prover returns one guess for each ciphertext, and the verifier checks these
guesses against the original bits.

When $x\in L$, correctness of the witness encryption scheme guarantees the
existence of a prover that succeeds with high probability. Therefore, when we query the $\QCIP[2]$ oracle with this verifier encrypting with a yes-instance, the oracle will return 1 since there exists a strategy (namely running the decryption algorithm) that will make the verifier accept.  The $\Gap$ gap assumption tells us that a $\QCIP[2]$ oracle cannot be used to distinguish yes- and no-instances of a language so the same
decryption property must hold for almost all sampled no-instances. Therefore, even when
$x\notin L$, there exists a (possibly inefficient) strategy that decrypts
ciphertexts correctly with high probability.

\paragraph{Choosing $P$ and $\overline{P}$.}
Following the paradigm in Section~\ref{sec:paradigm}, the problem reduces to
defining two adversaries and proving their indistinguishability. Both are
\emph{two-stage} algorithms matching the interface of the security
game: the first stage produces an instance and the second
stage answers a ciphertext. Concretely, we write $\overline{P} = (\overline
P_1, \overline P_2)$ for the unbounded adversary that breaks the witness
encryption scheme $\Pi = (\WEEnc, \WEDec)$, and $P = (P_1, P_2)$ for the
efficient adversary that no reduction can distinguish from it. 
{Let $L$ be a \textsf{QMA} language that admits a $\Gap$ gap problem.}
We define the two algorithms as follows:
\begin{itemize}
  \item \textbf{Algorithm of $\overline{P}$.} The first stage samples an instance
    \emph{outside} the language,
    \[
      x \leftarrow \overline P_1(1^n),\qquad x \notin L,
    \]
    and the second stage is the Bayesian decryptor: it answers a ciphertext $\ct$ with its most likely
    plaintext,
    \[
      \overline P_2(x, \ct) \;=\; \arg\max_{m}\,\Pr[\WEEnc(1^n,x,m)= \ct].
    \]
    Both stages may be computationally inefficient.
  \item \textbf{Algorithm of $P$.} The first stage samples an instance \emph{inside}
    the language together with a quantum witness,
    \[
      (x, |{w_x}\rangle^{\otimes t}) \leftarrow P_1(1^n),\qquad x \in L,
    \]
    and the second stage decrypts a ciphertext $\ct$ using the honest decryption
    algorithm with this witness,
    \[
      P_2(x, \ct, |{w_x}\rangle^{\otimes t}) \;=\; \WEDec(x, \ct, |{w_x}\rangle^{\otimes t}).
    \]
    Both stages are efficient, where the efficiency of $P_1$ is assumed as part of the $\Gap$ gap problem.
\end{itemize}
Crucially, $\overline P_2$ is a well-defined, deterministic function:
we established that ciphertexts generated from
no-instances can still be decrypted correctly. This is what makes $\overline{P}$ a fixed adversary, independent of
the reduction---and it is precisely this independence that lets us estimate its
behavior with a $\QCIP[2]$ oracle, as we describe next.

\paragraph{The Simulator Protocol and Hidden Tests.}
For any algorithm $Q$, let 
\begin{align}
\Gamma^Q :=\langle C,\Sigma^Q\rangle 
\end{align}
denote the algorithm obtained by viewing the interaction between the falsifiable assumption challenger
$C$ and the reduction $\Sigma^Q$ as a single algorithm.
The output of $\Gamma$ is the output of $C$.
We now build a $\QPT$
algorithm $\tilde{\Gamma}$ that, using only a $\QCIP[2]$ oracle,
reproduces $\Gamma$'s output on both sides:  
\begin{align}
  \Big|\Pr\big[\Gamma^{P}(1^n)=1\big]
       -\Pr_{x\leftarrow{L}_n}\!\big[\tilde{\Gamma}^{\QCIP[2]}(x)=1\big]\Big|
    &\ \le\ \epsilon(n), \label{eq:bridge-yes2}\\[2pt]
  \Big|\Pr\big[\Gamma^{\overline{P}}(1^n)=1\big]
       -\Pr_{x\leftarrow\overline{{L}}_n}\!\big[\tilde{\Gamma}^{\QCIP[2]}(x)=1\big]\Big|
    &\ \le\ \epsilon(n). \label{eq:bridge-no2}
\end{align}
Here, $L_n:= L\cap\{0,1\}^n$ and $\overline{L}_n:= \{0,1\}^n\setminus L_n$.
These equations are essentially saying that $\tilde{\Gamma}^{\QCIP[2]}(x)$ simulates $\Gamma^P$ when $x\in L$ and simulates $\Gamma^{\overline{P}}$ when $x\notin L$. This will allow us to later argue that both interactions are indistinguishable using $\Gap$ gap problem. 

Recall that the reduction $\Sigma$ has two stages. In the first stage, it
queries the first adversary to get instances $x_1,\ldots,x_s$. Then
$\Sigma$ 
generates ciphertexts under these instances, and queries the second adversary
to get one plaintext guess for each ciphertext. We simulate
these two stages by a $\QCIP[2]$ oracle.

The algorithm $\widetilde{\Gamma}^{\QCIP[2]}(x)$ queries the $\QCIP[2]$
oracle on $(V_1,V_2,x)$, which is defined as follows:
\begin{enumerate}
  \item $V_1$ runs the interaction between the challenger $C$ and the
  reduction $\Sigma$, using the given instances
  $x=(x_1,\ldots,x_s)$ in place of the instances that $\Sigma$ would normally
  receive from the first stage of its \textsf{WE} adversary. It continues until
  $\Sigma$ has generated all of its ciphertext queries
  $\ct_{i,j}$. We call these the \emph{real} ciphertexts.

  \item For each instance $x_i$, $V_1$ generates additional \emph{test}
  ciphertexts: $k/2$ encryptions of $0$ and $k/2$ encryptions of $1$ for some polynomial $k$.

  \item $V_1$ randomly shuffles the real and test ciphertexts and sends the
  resulting batch to the prover.

  \item The prover returns one plaintext guess for each ciphertext. The
  verifier $V_2$ rejects if any test ciphertext is answered incorrectly.
  Otherwise, it extracts the answers corresponding to the real ciphertexts,
  restores their original order, and gives them to $\Sigma$ as the answers it
  would have received from the second stage of its \textsf{WE} adversary.
  The interaction between $\Sigma$ and $C$ then continues, and $V_2$ outputs
  the final output bit of $C$.
\end{enumerate}

By definition, the $\QCIP[2]$ oracle evaluates this protocol against an
optimal prover, namely one that maximizes the probability that $V_2$ outputs
$1$. We show that the hidden tests force such a prover to behave essentially
like a decryptor. %By the definition of $P$ and $\overline{P}$, this means that on yes-instances, the prover will behave like $P$ and on no-instances, it will behave like $\overline{P}$.

Notice that the prover does not know
which ciphertexts are real and which are tests. This hiding property follows from the restriction on the reduction. Every real
ciphertext is an honest encryption under $x$, and every test ciphertext is
generated in exactly the same way. The random permutation then hides the
positions of the tests. Within the set of ciphertexts encrypting a fixed bit,
the prover cannot identify which ones came from the reduction and which ones
were planted by the verifier. Therefore, deliberately answering a real
ciphertext incorrectly creates a significant risk of answering a hidden test
incorrectly and being rejected.

The hidden tests therefore force an optimal prover to decrypt all ciphertexts.
As a result, $\tilde{\Gamma}^{\QCIP[2]}(x)$ approximates the interaction between
$\Gamma$ and the appropriate decryptor: $P$ when $x\in L$, and
$\overline{P}$ when $x\notin L$. This gives
\cref{eq:bridge-yes2,eq:bridge-no2}.

We now apply the gap assumption to the $\QPT$ distinguisher
$\tilde{\Gamma}$. The assumption states that no efficient algorithm can distinguish elements inside and outside the language, even given a $\QCIP[2]$ oracle. This means that the outputs of $\tilde{\Gamma}$ on the two
instance distributions are computationally indistinguishable:
\begin{align}
  \Big|\Pr_{x\leftarrow\overline{{L}}_n}\!\big[\tilde{\Gamma}^{\QCIP[2]}(x)=1\big]
       -\Pr_{x\leftarrow{L}_n}\!\big[\tilde{\Gamma}^{\QCIP[2]}(x)=1\big]\Big|
    \ \le\ \negl(n). \label{eq:gap-step2}
\end{align}
Chaining \eqref{eq:bridge-yes2}, \eqref{eq:bridge-no2}, and \eqref{eq:gap-step2} through the triangle inequality yields
\begin{align}\label{eq:goal2}
   \Big|\Pr\big[\Gamma^{P}(1^n)=1\big] 
       -\Pr\big[\Gamma^{\overline{P}}(1^n)=1\big]\Big|
    \ \le\ 3\epsilon(n).
\end{align}

\paragraph{Conclusion.}
Suppose a reduction $\Sigma$ bases the security of the \textsf{WE} scheme on a falsifiable assumption. Interacting with $\Sigma^{\overline{P}}$, the challenger in the falsifiable assumption game accepts with noticeable advantage because $\overline{P}$ breaks the \textsf{WE} scheme. By \cref{eq:goal2}, replacing $\overline{P}$ with $P$ alters this acceptance probability by only a negligible amount. 

However, $P$ is efficient. Therefore, $\Sigma^P$ is an efficient attack against
the falsifiable assumption. It follows that the assumption must be false.
Consequently, either no reduction of the stated form exists, or the underlying
falsifiable assumption is false. This proves our main result,
Theorem~\ref{thm:informalmain}.

\section{Preliminaries}

\subsection{Notation}
In this work, we denote sets and algorithms by capital letters such as $X, A$ and distributions
by bold such as $\mathbf{L}, \textbf{E}$. All sets and distributions are assumed to be parameterized by
some integer $n$, typically a security parameter. We write
$x \leftarrow \mathbf{D}$ to say that $x$ is sampled according to the
distribution $\mathbf{D}$. If $X$ is a set, then $x \leftarrow X$ means that
$x$ is chosen uniformly at random from the set. Let
$[n] := \{1, 2, \ldots, n\}$ for every $n \in \mathbb{N}$.
Furthermore, let $\mathrm{negl}(n)$ denote a negligible function, i.e., a
function that is asymptotically smaller than the inverse of any polynomial.
We follow the standard notation of quantum information [36]. We say that an
algorithm $A$ is $\QPT$ if it is quantum and runs in polynomial time relative to
its input size. In this work, when we say a polynomial, we always mean a
positive polynomial. We write $\mathrm{poly}(n)$ for an unspecified
polynomial in $n$. A function $f$ is an \emph{inverse polynomial} if
$f(n) = 1/p(n)$ for some polynomial $p$. For $N \in \mathbb{N}$ and
$q \in [0,1]$, we write $\mathrm{Bin}(N, q)$ for the binomial distribution
with $N$ trials and success probability $q$. An algorithm $A$ is non-uniform
if it is initialized with quantum advice and is said to be uniform if no
advice is given. Adversaries in the security definitions and proofs are
assumed to be uniform unless stated otherwise.
For an algorithm $Q$ whose queries and responses are classical strings, we
write $A^{Q}$ to mean that $A$ has classical query access to $Q$. When
$Q = (Q_1, Q_2)$ is a pair of algorithms, $A^{Q}$ means that $A$ has
classical query access to both $Q_1$ and $Q_2$; the two components may share
an internal, possibly quantum, state, which each invocation may read and
update. In all cases the interface is classical: we only consider classical
oracle access in this work.  % For a unitary $U$, we write $A^{|U\rangle}$ to mean that the algorithm $A$ has quantum query access to $U$.
%In either case, we add $\|$ next to the oracle to denote that the access is non-adaptive, meaning that the algorithm can submit multiple queries but all at once. Furthermore, if $A$ only gets access to the oracle on the same security parameter, then we add an $n$ subscript to the oracle. The black-box reductions considered in this work are restricted to non-adaptive, security-parameter-preserving reductions where the reduction gets oracle access to the attack non-adaptively and on the same security parameter (see Section~\ref{sec:bbred}).

\subsection{Indistinguishability Obfuscation for Null Quantum Circuits}

We recall the notion of indistinguishability obfuscation for null quantum
circuits from~\cite{BM21}, restricting attention to quantum circuits producing
a single classical output bit. Recall, a null quantum circuit is one that on any input, outputs 0 except with negligible probability. 

Note that \textsf{qIO} has been studied and defined for more general quantum
circuit classes~\cite{BKNY23,BBV24,HT25}. We refer the reader to these works
for the general definitions, but in this work we define \textsf{qIO} only for
null quantum circuits. 

\begin{definition}\label{def:nulliO}
An indistinguishability obfuscator for null quantum circuits (null--\textsf{qIO})
consists of the following pair of $\QPT$ algorithms.
\begin{itemize}
  \item $\Obf(1^n,Q)$: Given a security parameter $1^n$ and a classical description of a
        quantum circuit $Q$ with a single classical output bit, this algorithm
        produces a classical description of an obfuscated program $\widetilde{Q}$.
  \item $\Eval(\widetilde{Q},|\psi\rangle)$: Given a classical description of a program $\widetilde{Q}$
        and a quantum state $|\psi\rangle$, this algorithm produces a bit $b$.
\end{itemize}
\end{definition}

The correctness requirement is stated next.

\begin{definition}[Correctness]\label{def:qioCorrect}
A null--qIO $(\mathsf{Obf},\mathsf{Eval})$ satisfies correctness if, for
every negligible function $\nu$ and every polynomial $p$, there exist a
polynomial $k$ and a negligible function $\mu$ such that the following
holds for every family $\{Q_n\}_{n\in\mathbb{N}}$ of quantum circuits
with classical descriptions of size at most $p(n)$, states
$\{|\psi_n\rangle\}_{n\in\mathbb{N}}$, and bits
$\{b_n\}_{n\in\mathbb{N}}$ satisfying
$\Pr[Q_n(|\psi_n\rangle)=b_n]\ge 1-\nu(n)$:
\begin{align}
\Pr\Bigl[\mathsf{Eval}\bigl(\widetilde{Q},
|\psi_n\rangle^{\otimes k(n)}\bigr)=b_n
: \widetilde{Q}\leftarrow \mathsf{Obf}(1^n,Q_n)\Bigr]\ge 1-\mu(n).
\end{align}
\end{definition}

\begin{remark}\label{rem:relaxed}
This correctness notion relaxes the usual obfuscation guarantee along two
dimensions. First, it only applies to inputs on which the circuit's behavior is
decisive --- accepting or rejecting with overwhelming probability. Second, it
allows the evaluator to consume several copies of the input state rather than a
single copy. Relaxing the definition strengthens our separation.
\end{remark}

Finally, we state the security requirement.

\begin{definition}[Security of null--qIO]
\label{def:null-qio-security}
Let $\mathcal{O}=(\mathsf{Obf},\mathsf{Eval})$ be a null--qIO scheme.
We say that $\mathcal{O}$ is secure if the following holds.

Let $\{Q_{0,n}\}_{n\in\mathbb{N}}$ and $\{Q_{1,n}\}_{n\in\mathbb{N}}$ be any two polynomial-size families of quantum circuits having the
same size and the same input and output spaces. Suppose that both
families are null. Namely, suppose that there exists a negligible
function $\nu$ such that, for every $b\in\{0,1\}$, every $n$, and every
input state $\rho$, $\Pr\!\left[Q_{b,n}(\rho)=1\right]
    \leq \nu(n).$

Then, for any non-uniform $\QPT$ algorithm
$D_n$,
\[
\begin{aligned}
\Bigl|
    &\Pr\!\left[
        D_n\!\left(
            \mathsf{Obf}(1^n,Q_{0,n})
        \right)=1
    \right] -
    \Pr\!\left[
        D_n\!\left(
            \mathsf{Obf}(1^n,Q_{1,n})
        \right)=1
    \right]
\Bigr|
\leq \negl(n).
\end{aligned}
\]
\end{definition}

\subsection{Quantum Complexity Classes}

We first recall the definitions of the complexity classes Quantum-Merlin-Arthur
($\QMA$) and Quantum-Classical-Interactive-Proofs ($\QCIP[2]$).

\begin{definition}[$\QMA$]\label{def:qma}
A language $L\subseteq\{0,1\}^*$ is in $\QMA$ if there exists a $\QPT$
algorithm $Q$ and a polynomial $p$ such that for any $x\in\{0,1\}^*$,
\begin{enumerate}
  \item If $x\in L$, then there exists a $p(|x|)$-qubit state $|w_x\rangle$ such that
  \begin{equation}
    \Pr[1\leftarrow Q(x,|w_x\rangle)]\ \ge\ \tfrac{2}{3}.
  \end{equation}
  \item If $x\notin L$, then for any $p(|x|)$-qubit state $|w\rangle$,
  \begin{equation}
    \Pr[1\leftarrow Q(x,|w\rangle)]\ \le\ \tfrac{1}{3}.
  \end{equation}
\end{enumerate}
\end{definition}

Henceforth, for a $\QMA$ language $L$, we let $L_n:=\{0,1\}^n\cap L$ and $\overline{L}_n:=\{0,1\}^n\setminus L$.

\begin{definition}[$\QMA$ Relation]\label{def:qmarel}
Let $L$ be a language in $\QMA$ and let $Q$ be its verification algorithm. A
$\QMA$ relation $R_{L,Q}$ is the set of pairs $(x,|w\rangle)$ satisfying
$\Pr[1\leftarrow Q(x,|w\rangle)]\ge\frac{2}{3}$.
\end{definition}

We recall the complexity class $\QCIP$ as defined in~\cite{BLW,GK25}. $\QCIP$
is the complexity class for interactive protocols over classical
channels. We define it as a promise problem rather than a language, because the
oracles of \cref{def:qcipOracle} must be allowed to answer
arbitrarily on ambiguous instances, that is, on instances whose protocol value
lies strictly between the soundness and completeness thresholds. {In this paper we focus only on $\QCIP[2]$, which is a two-message variant of $\QCIP$ and is defined as follows:}

\begin{definition}\label{def:qcip}
A promise problem $(L_{\mathrm{yes}},L_{\mathrm{no}})$ is in $\QCIP[2,c,s]$ if
there exist $\QPT$ algorithms $V_1,V_2$ such that:
\begin{itemize}
  \item \emph{Completeness:} for every $x\in L_{\mathrm{yes}}$, there is a
        family $\{z_y: y\in\{0,1\}^{\mathrm{poly}(|x|)}\}$ of
        $\mathrm{poly}(|x|)$-length strings with
        \begin{equation}
          \Pr_{(\rho,y)\leftarrow V_1(x)}[V_2(x,\rho,y,z_y)=1]\ \ge\ c(|x|).
        \end{equation}
  \item \emph{Soundness:} for every $x\in L_{\mathrm{no}}$ and every such
        family $\{z_y\}$,
        \begin{equation}
          \Pr_{(\rho,y)\leftarrow V_1(x)}[V_2(x,\rho,y,z_y)=1]\ \le\ s(|x|).
        \end{equation}
\end{itemize}
Henceforth, we let $\QCIP[2]:=\QCIP[2,2/3,1/3]$ to simplify notation.
\end{definition}

In the definition above, $\rho$ can be a quantum state, but $x,y,z_y$ are classical strings. Note that error reduction for $\QCIP[2]$ follows from parallel repetition: run
independent copies of the protocol in parallel and accept by majority vote.

For $\QPT$ algorithms $V_1,V_2$ and a string $x$, we write
\begin{equation}\label{eq:val}
  \val_{V_1,V_2}(x)\ :=\ \max_{\{z_y\}_y}\ \Pr_{(\rho,y)\leftarrow V_1(x)}\big[V_2(x,\rho,y,z_y)=1\big]
\end{equation}
for the value of the protocol on input $x$.

\begin{definition}\label{def:qcipOracle}
The set $\mathcal{O}_{\QCIP[2]}$ consists of any (computationally unbounded)
algorithm $O$ that satisfies the following condition. 
For any
$\QPT$ algorithms, 
$V_1$ and $V_2$, and any
$x\in\{0,1\}^*$,
\begin{equation}
  O(V_1,V_2,x)=
  \begin{cases}
    1 & \text{if } \val_{V_1,V_2}(x)\ \ge\ 2/3,\\[2pt]
    0 & \text{if } \val_{V_1,V_2}(x)\ \le\ 1/3.
  \end{cases}
\end{equation}
On inputs with $1/3<\val_{V_1,V_2}(x)<2/3$ the output of $O$ is unconstrained.
\end{definition}

Some care is needed when giving an algorithm {with} access to a $\QCIP[2]$ oracle,
because $\QCIP[2]$ is a class of promise problems. We do not fix a single
oracle but quantify over the whole set $\mathcal{O}_{\QCIP[2]}$. Concretely,
throughout this work a statement of the form ``$A^{O}$ cannot perform a task''
(for instance, distinguishing two distributions) means: there exists at least
one $O\in\mathcal{O}_{\QCIP[2]}$ for which $A^{O}$ fails. Symmetrically, $A^O$
counts as \emph{succeeding} only if it succeeds against \emph{every}
$O\in\mathcal{O}_{\QCIP[2]}$. Condition~4 of \cref{def:gap} is stated in exactly
this form.

The reason for this convention is that on an ambiguous input $(V_1,V_2,x)$ whose protocol value lies strictly between the two thresholds, i.e. $1/3  < \val_{V_1,V_2}(x) <2/3$,
it is not clear how the oracle should answer. This issue is not harmless: an oracle could smuggle useful
information into those answers---information that has nothing to do with
deciding the promise problem, yet that an algorithm might exploit. Quantifying
over all valid $O$ rules this out. An algorithm that succeeds only under a
particular choice of answers on ambiguous instances is reading off a feature of
one implementation rather than using the power to decide promise instances, and
does not count; only an algorithm that succeeds against every valid oracle is
using that deciding power. This is why $A^O$
counts as \emph{succeeding} only if it succeeds against \emph{every}
$O\in\mathcal{O}_{\QCIP[2]}$.

\subsection{Falsifiable Assumptions}

We recall the notion of falsifiable assumptions~\cite{Naor03,GW11,ADSS24,BMNY26}.

\begin{definition}[Falsifiable Cryptographic Assumptions]\label{def:falsifiable}
A falsifiable cryptographic assumption is a pair $(C,c)$ of a $\QPT$ algorithm
$C$ (challenger) and a constant $c\in[0,1)$. $C$ interacts with an adversary
over a quantum channel and then outputs $1/0$. The assumption is said to be
true if, for any $\QPT$ algorithm $A$,
\begin{equation}\label{eq:falsifiable}
  \Pr\big[1\leftarrow\langle C(1^n),A(1^n)\rangle\big]\ \le\ c+\negl(n),
\end{equation}
where $1\leftarrow\langle C(1^n),A(1^n)\rangle$ means that $C$ outputs $1$
after the interaction with $A$.
\end{definition}

\subsection{Witness Encryption}
\label{sec:we}

We recall the notion of witness encryption (\textsf{WE}) for $\QMA$ from~\cite{BM21}.
Initially, \textsf{WE} for \textsf{NP} was introduced in~\cite{GGSW13}. We define encryption only
with respect to classical messages. Also, we only define it for single-bit messages,
but it is easy to generalize the definition to multiple bits.

\begin{definition}[Witness Encryption for $\QMA$]\label{def:we}
Let $L\in\QMA$ and $Q$ be its verification algorithm. A \textsf{WE}
scheme $(\WEEnc,\WEDec)$ for the relation $R_{L,Q}$ consists of the following
$\QPT$ algorithms:
\begin{itemize}
  \item $\WEEnc(1^n,x,m)$: On input a statement $x\in \{0,1\}^n$ and a message
        $m\in\{0,1\}$, it outputs a classical ciphertext $\ct$.
  \item $\WEDec(x,\ct,|\psi\rangle)$: On input a statement $x$, a classical
        ciphertext $\ct$, and a quantum state $|\psi\rangle$, it outputs a
        message $\widetilde{m}\in\{0,1\}$ or $\bot$.
\end{itemize}
\end{definition}

We define correctness below. Throughout, $t$ denotes the number of witness
copies consumed by decryption.

\begin{definition}[Correctness]\label{def:weCorrect}
A \textsf{WE} $(\WEEnc,\WEDec)$ for a relation $R_{L,Q}$ is correct if
there exists a negligible function $\nu(n)$ and a polynomial $t(n)$ such that
for all $m\in\{0,1\}$, all sequences of instances $\{x_n\}_{n\in\mathbb{N}}$,
where $|x_n|=n$ for each $n\in\mathbb{N}$, and witnesses
$\{|\psi_n\rangle\}_{n\in\mathbb{N}}$, where each $(x_n,|\psi_n\rangle)\in R_{L,Q}$,
it holds that
\[
  \Pr\!\left[\WEDec\big(x_n,\WEEnc(1^n,x_n,m),|\psi_n\rangle^{\otimes t(n)}\big)=m\right]\ \ge\ 1-\nu(n).
\]
\end{definition}

{Next} we present the definition of security of \textsf{WE} against
quantum algorithms. Our definition is a bit weaker than the traditional
definition, but this just makes our separation stronger. Specifically, the
traditional notion requires that for any $x\notin L$, encryptions of $0$ and
$1$ are computationally indistinguishable. Instead, we allow the adversary to
choose $x$, and it wins the security game if it manages to distinguish the
encrypted message and $x\notin L$. 

\begin{definition}[Security]\label{def:weSec}
A \textsf{WE} scheme $\Pi:=(\WEEnc,\WEDec)$ for a relation $R_{L,Q}$ is secure
if for any $\QPT$ adversary $A:=(A_1,A_2)$,
\begin{equation}
  \Pr[\WESec_{\Pi,A}(1^n)=1]\ \le\ \tfrac{1}{2}+\negl(n).
\end{equation}
\end{definition}

\begin{center}
\fbox{\begin{minipage}{0.92\textwidth}
$\WESec_{\Pi,A}(1^n)$:
\begin{enumerate}
  \item $b\leftarrow\{0,1\}$.
  \item $x\leftarrow A_1(1^n)$. %{Here, $\sigma$ is a quantum state.}
  \item $\ct\leftarrow\WEEnc(1^n,x,b)$.
  \item $b'\leftarrow A_2(x,\ct)$.
  \item The output of the experiment is $1$ if $x\in \overline{L}_n$ and $b'=b$.
        Otherwise, the output is $0$.
\end{enumerate}
\end{minipage}}
\end{center}

\begin{theorem}[Lemma 3.15 in~\cite{BM21}]\label{thm:bm}
Assuming null--\textsf{qIO}, there exists \textsf{WE} for $\QMA$.
\end{theorem}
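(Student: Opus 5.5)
The plan follows the construction of Bartusek and Malavolta. Given a null--\textsf{qIO} scheme $(\Obf,\Eval)$ and a $\QMA$ language $L$ with verifier $Q$, define $\WEEnc(1^n,x,m)$ as follows. Amplify $Q$ by parallel repetition into a verifier $Q'_x$ with soundness error $2^{-n}$ on $x\notin L$, acting on polynomially many copies of the witness. Let $C_{x,m}$ be the quantum circuit that, on input state $\sigma$, runs $Q'_x(\sigma)$ and outputs $m$ if it accepts and $0$ otherwise. The ciphertext is $\ct=\Obf(1^n,C_{x,m})$, which is a classical string by \cref{def:nulliO}. Decryption computes $\WEDec(x,\ct,|w\rangle^{\otimes t})=\Eval(\ct,|w\rangle^{\otimes t})$, with $t$ equal to the number of copies $Q'$ needs times the polynomial $k$ supplied by \cref{def:qioCorrect}. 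To keep the circuits for $m=0$ and $m=1$ the same size, pad $C_{x,0}$ to match $C_{x,1}$.

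\textbf{Correctness.} Take $(x,|w\rangle)\in R_{L,Q}$. Amplification gives $\Pr[C_{x,m}(|w\rangle^{\otimes r})=m]\ge 1-\nu(n)$ for a negligible $\nu$. This is exactly the decisive-input hypothesis of \cref{def:qioCorrect}, which then yields $\Eval(\ct,|w\rangle^{\otimes rk})=m$ with probability at least $1-\mu(n)$.

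\textbf{Security.} The key observation is that for $x\notin L$ both $C_{x,0}$ and $C_{x,1}$ are null: on every input they output $1$ with probability at most $2^{-n}$. For $C_{x,0}$ this is trivial, and for $C_{x,1}$ it is soundness of $Q'$, which holds even against entangled inputs because parallel repetition of $\QMA$ is sound. So the adversary's challenge is to distinguish $\Obf(C_{x,0})$ from $\Obf(C_{x,1})$ on an $x$ of its own choosing.

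This is the main obstacle. \cref{def:null-qio-security} is stated for fixed circuit families and non-uniform distinguishers, whereas in $\WESec$ the adversary picks $x$ adaptively. I would resolve this with an averaging argument. Suppose $A=(A_1,A_2)$ wins with advantage $\epsilon(n)$. Then for infinitely many $n$ there is a specific $x_n\notin L$ in the support of $A_1(1^n)$ on which $A_2(x_n,\cdot)$ distinguishes with advantage at least $\epsilon(n)$, because outputs $x\in L$ never contribute to winning. Any internal state $A_1$ passes to $A_2$ can be fixed to its conditional value and hardwired as quantum advice. Hardwiring $x_n$ together with this advice gives a non-uniform distinguisher $D_n$ against the null families $\{C_{x_n,0}\}$ and $\{C_{x_n,1}\}$, which contradicts \cref{def:null-qio-security}. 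For $n$ where no such $x$ exists, pick an arbitrary no-instance, or set the families trivially equal.

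The remaining checks are routine: the circuit families have polynomial size and identical input/output spaces, and $\WEDec$ may output $\bot$ without affecting correctness.
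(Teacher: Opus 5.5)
Your proposal is correct and matches the construction the paper records in \cref{prop:construction}. Encryption obfuscates either the null circuit or the soundness-amplified verifier $Q'_x$ (padded to equal size), decryption runs $\Eval$ on $t=rk$ witness copies, and security follows because both circuits are null whenever $x\notin L$. Your averaging-and-hardwiring step turns an adaptive $\WESec$ adversary into a non-uniform distinguisher for the fixed null pair $\{C_{x_n,0}\},\{C_{x_n,1}\}$; this is the detail the paper leaves implicit, saying only that the reduction to its weakened security notion is straightforward.
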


\subsection{Quantum Black-Box Reductions}\label{sec:bbred}

Quantum black-box reductions are explored in several
works~\cite{LP24,TZ25,CM24,CCS24}. We formulate quantum black-box reductions
specifically showing the security of a \textsf{WE} scheme based on a falsifiable
cryptographic assumption.

\begin{definition}[$\Pi$-adversary]
\label{def:adversary}
Let $L\in\QMA$ and $Q$ be its verification algorithm, and let
$\Pi=(\WEEnc,\WEDec)$ be a \textsf{WE} scheme for the relation
$R_{L,Q}$. We say that a quantum algorithm $\overline{P}=(\overline{P}_1,\overline{P}_2)$ is a
$\Pi$-adversary if there exists a polynomial $p$ such that
\begin{equation}
  \Pr\big[\WESec_{\Pi,\overline{P}}(1^n)=1\big]\ \ge\ \tfrac{1}{2}+\tfrac{1}{p(n)}
\end{equation}
for infinitely many $n\in\mathbb{N}$. In this case, we say $\overline{P}$ has
advantage $1/p$.
\end{definition}

%A $\Pi$-adversary $\overline{P}=(\overline{P}_1,\overline{P}_2)$ is accessed as a \emph{stateful} oracle: a query to $\overline{P}_1$ returns only the instance $x$, while the state $\sigma$ produced by $\overline{P}_1$ is kept by the oracle and used to answer the matching query to $\overline{P}_2$. The reduction never receives $\sigma$.

\begin{definition}[\textsf{WE}--reduction]\label{def:reduction}
Let $\Pi$ be a \textsf{WE} scheme for $\QMA$. A quantum black-box {reduction}
showing the security of $\Pi$ from a falsifiable cryptographic assumption
$(C,c)$ is a $\QPT$ algorithm $\Sigma^{(\cdot)}$ such that for any polynomial
$p$, there exists a polynomial $p'$ such that for any (even inefficient)
$\Pi$-adversary $\overline{P}$ with advantage $1/p$, the algorithm
$\Sigma^{\overline{P}}$ breaks the assumption with advantage $1/p'$, i.e.
\begin{equation}\label{eq:redbreaks}
  \Pr\big[1\leftarrow\langle C(1^n),\Sigma^{\overline{P}}(1^n)\rangle\big]\ \ge\ c+\frac{1}{p'(n)},
\end{equation}
for infinitely many $n\in\mathbb{N}$. 
For short, we call such a reduction a \emph{\textsf{WE}--reduction}.
\end{definition}

In this work, we consider only reductions $\Sigma$ that query the
adversary on the same security parameter, non-adaptively, classically, and on honestly
generated ciphertexts. More formally, our separation applies to reductions defined as follows, 
which we call \emph{{restricted} \textsf{WE}--reduction}.

\begin{definition}[Restricted \textsf{WE}--reduction]\label{def:wered}
Let $(C,c)$ be a falsifiable assumption, and let $\Sigma^{(\cdot)}$ be a
\textsf{WE}--reduction defined in \cref{def:reduction}. We say that
$\Sigma^{(\cdot)}$ is a \emph{restricted \textsf{WE}--reduction} if there
exist polynomials $s=s(n)$ and $\ell=\ell(n)$ such that, in the
interaction
$
  \big\langle C(1^n),\Sigma^{\overline P}(1^n)\big\rangle,
$
the oracle access of $\Sigma$ to
$\overline P=(\overline P_1,\overline P_2)$ has the following form.
\begin{enumerate}
\item The challenger $C$ and the reduction $\Sigma$ may interact arbitrarily
throughout the execution. Only $\Sigma$ has oracle access to $\overline P$.

\item For each $i\in[s]$, $\Sigma$ submits $1^n$ to $\overline P_1$ and receives
an instance $x_i$. 

\item For each $i \in [s]$ and $j \in [\ell]$, a bit $m_{i,j} \gets
        \{0,1\}$ is sampled uniformly at random and the ciphertext $\mathsf{ct}_{i,j}
        \leftarrow \mathsf{WE.Enc}(1^n, x_i, m_{i,j})$ is computed by running the encryption algorithm honestly. This step may be performed by $\Sigma$, by $C$, or jointly through their interaction. 

\item For each $i \in [s]$ and $j \in [\ell]$,  $\Sigma$ submits {each $x_i$ and} $\ct_{i,j}$ to the oracle $\overline P_2$ and receives a
response $\widetilde m_{i,j}\gets \overline{P}_2(x_i,\ct_{i,j})$. 

\item The execution ends when $C$ outputs its decision
bit.
\end{enumerate}

Thus, the oracle queries are {parameter-preserving}, classical, and {non-adaptive}. 
\end{definition}

\begin{remark}[Honestly generated ciphertexts]
Note that in \cref{def:wered}, we assume that ciphertext queries are honestly generated, meaning they are generated by running the encryption algorithm honestly on a randomly sampled plaintext bit.
We believe that this restriction is justified because, intuitively, the
reduction should turn a valid attack against the \textsf{WE} scheme into
a valid attack against the falsifiable assumption. A valid attack against
\textsf{WE} involves distinguishing or breaking the security of an
honestly generated ciphertext. Therefore, a reduction should be able to use this property alone to carry out a successful attack on the underlying falsifiable assumption, without needing to query $\overline{P}$ on maliciously
generated ciphertexts. 
\end{remark}

%\begin{remark}[Why the state is kept by the oracle]\label{rem:stateful}
%The convention that $\Sigma$ never receives the register $I$\mor{what is $I$? It did not appear in the above definition} is needed, and it is also the natural one: $I$ contains the adversary's private state, and a black-box reduction interacts with the adversary only through its input/output behavior. If instead $\Sigma$ were handed $I$, then no simulator could work. Indeed, the $\QPT$ simulator $P$ we build simply decrypts the ciphertexts and its internal state contains $\QMA$ witnesses. If $\Sigma$ receives these witnesses, it can run the verification algorithm and distinguish $P$ and $\overline{P}$. \end{remark}

Since a restricted \textsf{WE}--reduction only queries non-adaptively, we can split the execution in \cref{def:wered} into multiple phases that will be useful in our analysis later on.

\begin{definition}[Decomposition of $\Gamma$]\label{def:decomposition}
Let $(C,c)$ be a falsifiable assumption, and let $\Sigma^{(\cdot)}$ be a
restricted \textsf{WE}--reduction as in \cref{def:wered}. 
For any adversary $\overline{P}=(\overline{P}_1,\overline{P}_2)$, 
let
\begin{align}
\Gamma^{\overline{P}}(1^n) := \langle C(1^n), \Sigma^{\overline{P}}(1^n)\rangle
\end{align}
denote
the algorithm obtained by viewing $C$ and $\Sigma$ together as a single
algorithm; its output is the decision bit of $C$. 
We define the following two algorithms, $\Gamma_1,\Gamma_2$, and consider the following \emph{phased execution}:
\begin{itemize}
\item \textbf{Phase 1:} Run $\overline{P}_1(1^n)$ independently $s$ times to
generate $x=(x_1,\dots,x_s)$. The algorithm $\overline{P}_1$ is not executed in
any later phase.
\item \textbf{Phase 2:} $\Gamma_1(1^n,x)$ simulates the execution of
$\Gamma^{\overline{P}}(1^n)$, including the interaction between $C$ and $\Sigma$,
until just before $\Sigma$ makes its first query to $\overline{P}_2$. When
$\Sigma$ makes its $i$-th query to $\overline{P}_1$, $\Gamma_1$ answers it with
the pre-generated instance $x_i$ {in Phase 1} instead of executing $\overline{P}_1$. The
output of this phase consists of the pairs
$(\ct_{i,j})_{i\in[s],j\in[\ell]}$ generated in this
execution, together with the joint internal state $st$ of $C$ and
$\Sigma$ at this point.
\item \textbf{Phase 3:} For each $i\in[s]$ and $j\in[\ell]$, run $\overline{P}_2$
on $(x_i,\ct_{i,j})$, and let $\widetilde{m}_{i,j}$ be the
response.
\item \textbf{Phase 4:} $\Gamma_2(st,x,\mathsf{ct},\widetilde{m})$
resumes the execution of $\Gamma^{\overline{P}}(1^n)$ from the state $st$. When
$\Sigma$ queries $\overline{P}_2$ on $\mathsf{ct}_{i,j}$, $\Gamma_2$ answers it with the recorded response $\widetilde{m}_{i,j}$ 
{obtained in Step 3} instead of executing
$\overline{P}_2$. The phase ends when $C$ outputs its decision bit, which is the
output of the phased execution.
\end{itemize}
\end{definition}

\begin{remark}\label{rem:state}
The state $st$ is determined by the execution of Phase~2. When no
confusion arises, we suppress it from the notation and write
$\Gamma_2(x,\mathsf{ct},\widetilde{m})$.
\end{remark}

\begin{lemma}\label{lem:decomposition}
Let $\Sigma^{(\cdot)}$ be a restricted \textsf{WE}--reduction. Then the
output of the phased execution of \cref{def:decomposition} is
distributed identically to the output of $\Gamma^{\overline{P}}(1^n)$.
\end{lemma}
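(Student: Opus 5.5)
The plan is to compare the two executions step by step and show that the phased execution is just a rescheduling of the original run, i.e., a sequence of operations that do not affect the joint distribution of everything the challenger $C$ eventually sees. The argument is a coupling: run $\Gamma^{\overline{P}}(1^n)$ and the phased execution on the same randomness, and show that the final transcript, and hence $C$'s decision bit, has the same distribution in both. Everything rests on the structure forced by \cref{def:wered}. All $s$ queries to $\overline{P}_1$ are made on the fixed input $1^n$, and their answers influence the rest of the execution only through the returned strings $x_i$. All queries to $\overline{P}_2$ are classical, are determined by $(x_i,\ct_{i,j})$, and come after every $\overline{P}_1$ query. Their answers enter $\Sigma$ only as the classical bits $\widetilde m_{i,j}$.

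\textbf{Step 1: the $\overline{P}_1$ queries.} Each call $\overline{P}_1(1^n)$ has an input that does not depend on anything $C$ or $\Sigma$ has done. So the joint distribution of $(x_1,\dots,x_s)$ is the same whether the calls are made interleaved with the $C$--$\Sigma$ interaction or all in advance, as in Phase~1. Moreover, conditioned on $x$, the evolution of $C$ and $\Sigma$ is just $\Gamma_1(1^n,x)$. Here the honest ciphertext generation of item~3 of \cref{def:wered} is performed by whichever party does it, inside $\Gamma_1$. It follows that the pair $(x,\mathsf{ct},st)$ has the same distribution in both executions.

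\textbf{Step 2: the $\overline{P}_2$ queries.} Because the interface is classical and the queries are non-adaptive, no query to $\overline{P}_2$ depends on the answer to an earlier $\overline{P}_2$ query. Any processing that $C$ and $\Sigma$ do between these queries acts on their own registers. It therefore commutes with the evaluation of $\overline{P}_2$ on the already-fixed classical inputs $(x_i,\ct_{i,j})$. Hence all responses $\widetilde m$ can be computed first, as in Phase~3, and then fed back in by $\Gamma_2$, as in Phase~4. The joint distribution of $(st,x,\mathsf{ct},\widetilde m)$, and hence of the output of $\Gamma_2$, is unchanged.

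\textbf{Main obstacle: the internal state of $\overline{P}$.} The delicate point is that, per the notation section, $\overline{P}_1$ and $\overline{P}_2$ may share an internal, possibly quantum, state. Reordering is then justified only if the joint distribution of answers does not depend on when the calls are made. I would handle this by making the scheduling convention explicit. Within a restricted reduction, the oracle's answers form a classical random variable $(x,\widetilde m)$, which the oracle produces as a function only of its classical query inputs and its own private state. Meanwhile, $C$ and $\Sigma$ act on disjoint registers. Operations on disjoint systems commute, and so do measurements producing classical outcomes. So the joint state after all operations is the same under either ordering, and the lemma follows.

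For the adversaries actually used later, $\overline{P}$ and $P$, this is immediate. $\overline{P}_2$ is a deterministic function of $(x_i,\ct_{i,j})$. For $P$, the only shared state is the witness copies attached to each $x_i$, and these are untouched by $C$ and $\Sigma$.
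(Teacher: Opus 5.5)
Your proposal is correct and follows essentially the same rescheduling argument as the paper. The $\overline{P}_1$ calls take only $1^n$ and can be moved to the front, the honestly generated ciphertexts are all fixed by the end of Phase~2, and non-adaptive classical $\overline{P}_2$ queries let all answers be computed in one batch and replayed by $\Gamma_2$. The one difference is that the paper simply assumes each oracle response depends only on its query and fresh randomness, whereas your commutation of operations on disjoint registers also covers a shared oracle state; in that case Phase~3 must issue the $\overline{P}_2$ queries in their original order.
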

\begin{proof}
The phased execution differs from $\Gamma^{\overline{P}}(1^n)$ only in the timing
of the executions of $\overline{P}_1$ and $\overline{P}_2$. Each execution of $\overline{P}_1$ takes the
input $1^n$ and fresh randomness only, so its output is independent of
the preceding transcript; the $s$ executions may therefore be moved to
the beginning of the execution, and answering the queries of $\Sigma$
with the pre-generated $x_1,\dots,x_s$ leaves the joint distribution of
the execution unchanged. Similarly, each honest generation of
$\mathsf{ct}_{i,j}$ is sampled from a distribution that depends only on $x_i$, so it may be
moved before the first query to $\overline{P}_2$. Finally, since the queries to
$\overline{P}_2$ are non-adaptive, the list $(x_i,\mathsf{ct}_{i,j})_{i,j}$ is
determined at the end of Phase~2, and by assumption each response of
$\overline{P}_2$ depends only on its query and on randomness independent of the
transcript; hence computing all responses in Phase~3 and replaying them
in Phase~4 also leaves the joint distribution unchanged. In particular,
the decision bit of $C$ has the same distribution in both executions.
\qed \end{proof}

\subsection{$\Gap$ Gap Problem}

We introduce the notion of $\Gap$ gap problems. We will assume the
existence of these problems in our separation.

\begin{definition}[$\Gap$ gap problems]\label{def:gap}
A $\Gap$ gap problem of a
language $L$ in $\QMA$ with verification algorithm $Q$ and a $\QMA$ relation
$R_{L,Q}$ is a pair $(\bL,\bLb)$ satisfying the following conditions:
\begin{enumerate}
  \item $\bL:=\{\bL_n\}_{n\in\mathbb{N}}$ and
        $\bLb:=\{\bLb_n\}_{n\in\mathbb{N}}$ are distribution ensembles, where
        $\bL_n$ is a distribution over $L_n:=L\cap\{0,1\}^n$ and $\bLb_n$ is a
        distribution over $\overline{L}_n:=\{0,1\}^n\setminus L_n$ for each
        $n\in\mathbb{N}$.
  \item There exists a $\QPT$ algorithm $\SampYes$ that takes as input $(1^n,1^t)$, for an arbitrary polynomial $t=t(n)$,
        and outputs $(x,|w_x\rangle^{\otimes t})$, where
        $(x,|w_x\rangle)\in R_{L,Q}$, such that the projection to the first
        coordinate of the output is $\bL_n$.
  \item There exists a not-necessarily-efficient algorithm $\SampNo$ that takes
        as input $1^n$ and samples from the distribution $\bLb_n$.
\item For every polynomial $s$ and every $\QPT$ distinguisher
        $A$, there exists an oracle $O\in\mathcal{O}_{\QCIP[2]}$ such that
        \begin{equation}\label{eq:gapind}
          \Big|\Pr_{(x_1,...,x_s)\leftarrow\bL_n^{\otimes s}}[A^{O}(x_1,...,x_s)=1]
          -\Pr_{(x_1,...,x_s)\leftarrow\bLb_n^{\otimes s}}[A^{O}(x_1,...,x_s)=1]\Big|\ \le\ \negl(n),
        \end{equation}
        where $\bL_n^{\otimes s}$ (resp.\ $\bLb_n^{\otimes s}$) denotes $s$
        independent samples of $\bL_n$ (resp.\ $\bLb_n$).
\end{enumerate}
When $s$ independent instances are needed, we write $\SampYes(1^n,1^t,1^s)$
for the algorithm that runs $s$ independent copies of $\SampYes(1^n,1^t)$, and
similarly for $\SampNo$.
\end{definition}

Note that we only allow for classical access to the $\QCIP[2]$ oracle in this
work. This can only strengthen our separation, as it weakens the assumption of
a $\Gap$ gap problem.

\section{Simulating the Ideal Decryptor}\label{sec:main}

In this section, we prove \cref{thm:main} below, which is the main simulation result
used in our separation.

For any challenger $C$ of a falsifiable assumption and a restricted
\textsf{WE}--reduction $\Sigma$, let $\Gamma$ be the interaction between the two algorithms and let $\Gamma_1,\Gamma_2$ be the decomposition as defined in \cref{def:decomposition}.

Let $L$ be a $\QMA$
language with a $\Gap$ gap problem $(\bL,\bLb)$ and verification algorithm
$Q$. Let $\Pi=(\WEEnc,\WEDec)$ be a \textsf{WE} scheme for $R_{L,Q}$ that
uses at most $t=t(n)$ witness copies for decryption.

We now define the two experiments that will be compared. Both experiments depend on polynomials $s,\ell$ which are determined by $\Sigma$ in the same way as \cref{def:wered}. They also depend on $t$ which is determined by $\Pi$.  

\begin{figure}[ht]
\begin{center}
\fbox{\begin{minipage}{0.92\textwidth}
$\DecYes^{\SampYes}_{\Gamma,D}(1^n)$:
\begin{enumerate}
  \item Sample
  \[
      (x_i,|\psi_i\rangle^{\otimes t\ell})_{i\in[s]}
      \leftarrow\SampYes(1^n,1^{t\ell},1^s),
  \]
  and set $x=(x_1,\ldots,x_s)$.

  \item Run
  \[
      \big(\ct_{i,j}\big)_{i,j}
      \leftarrow\Gamma_1(1^n,x).
  \]

  \item For every $i\in[s]$ and $j\in[\ell]$, compute
  \[
      a_{i,j}\leftarrow
      D(x_i,\ct_{i,j},|\psi_i\rangle^{\otimes t}),
  \]
  using fresh witness copies for each query.

  \item Output $\Gamma_2(x,\ct,a)$.
\end{enumerate}
\end{minipage}}
\end{center}
\caption{The yes-based experiment.}\label{fig:decyes}
\end{figure}

\begin{figure}[ht]
\begin{center}
\fbox{\begin{minipage}{0.92\textwidth}
$\DecNo^{\SampNo}_{\Gamma,S}(1^n)$:
\begin{enumerate}
  \item Sample
  \[
      (x_1,\ldots,x_s)\leftarrow\SampNo(1^n,1^s),
  \]
  and set $x=(x_1,\ldots,x_s)$.

  \item Run
  \[
      \big(\ct_{i,j}\big)_{i,j}
      \leftarrow\Gamma_1(1^n,x).
  \]

  \item For every $i\in[s]$ and $j\in[\ell]$, compute
  \[
      a_{i,j}\leftarrow S(x_i,\ct_{i,j}).
  \]

  \item Output $\Gamma_2(x,\ct,a)$.
\end{enumerate}
\end{minipage}}
\end{center}
\caption{The no-based experiment.}\label{fig:decno}
\end{figure}

\begin{theorem}\label{thm:main}
There exist a computationally unbounded classical function
$S^*$ and a QPT algorithm $D^*$, both depending only on $\Pi$,
such that the following holds. 

Let $\epsilon$ be an inverse polynomial. For every falsifiable assumption $(C,c)$, and every
restricted \textsf{WE}--reduction $\Sigma$ showing the security of $\Pi$ from
$(C,c)$, let $\Gamma$ be the interaction $\langle C,\Sigma\rangle$ and let $(\Gamma^1,\Gamma^2)$ be its decomposition
(see Definition~\ref{def:decomposition}). Then, for all sufficiently large $n$,
\begin{equation}\label{eq:main}
  \Big|
  \Pr\big[\DecYes^{\SampYes}_{\Gamma,D^*}(1^n)=1\big]
  -
  \Pr\big[\DecNo^{\SampNo}_{\Gamma,S^*}(1^n)=1\big]
  \Big|
  \le \epsilon(n)
\end{equation}
for all sufficiently large $n\in\mathbb{N}$.
In fact, $D^*$ can be taken as the decryption algorithm, $\WEDec$. 
\end{theorem}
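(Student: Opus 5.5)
We take $D^*:=\WEDec$ and let $S^*$ be the Bayesian decryptor, $S^*(x,\ct):=\argmax_{m}\Pr[\WEEnc(1^n,x,m)=\ct]$ (ties broken lexicographically), which is a fixed classical function depending only on $\Pi$. The argument has three steps, following the plan of \cref{sec:oursep}. First, build a $\QPT$ oracle algorithm $\tilde\Gamma$ that queries a $\QCIP[2]$ oracle on the hidden-test protocol $(V_1,V_2,x)$. Second, prove that $\tilde\Gamma^{O}(x)$ approximates $\DecYes$ when $x\leftarrow\bL_n^{\otimes s}$ and approximates $\DecNo$ when $x\leftarrow\bLb_n^{\otimes s}$. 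Third, close the gap between the two with Condition~4 of \cref{def:gap} and the triangle inequality. A single oracle call yields only one bit, whereas $\Gamma$ outputs a bit with some probability. So $\tilde\Gamma$ will estimate $\Pr[\text{output }1]$ by binary search over thresholds $\tau\in\{0,\epsilon/10,2\epsilon/10,\dots\}$. It asks whether $\val\ge\tau$ using a padded verifier: accept with probability $1/2$ outright, otherwise run the protocol and shift thresholds, amplified by parallel repetition. It then outputs $1$ with the estimated probability. Any ambiguity in the oracle answers costs at most $O(\epsilon)$.

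\textbf{Step 1 (yes side).} On input $x$, $V_1$ runs $\Gamma_1(1^n,x)$ to obtain the real ciphertexts. For each $i$ it adds $k$ test ciphertexts, half encrypting $0$ and half encrypting $1$, shuffles everything with a uniform permutation, and sends the batch while keeping $st$ and the test plaintexts in $\rho$. $V_2$ rejects if any test answer is wrong; otherwise it outputs $\Gamma_2(x,\ct,a)$. On the yes side we must show that $\val(x)$ is within $\epsilon/10$ of $q_{\mathrm{yes}}(x):=\Pr[\Gamma_2(x,\ct,\WEDec\text{ answers})=1]$.

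For the lower bound, answer honestly with $\WEDec$. By \cref{def:weCorrect} and a union bound over the $s(\ell+k)$ ciphertexts, this strategy errs with negligible probability, provided the prover is allowed to choose the best fixed answer string $z_y$. In fact the deterministic Bayesian answer is at least as good, since correctness forces the supports of encryptions of $0$ and $1$ to have negligible overlap.

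For the upper bound, fix any strategy $z_y$ and consider the real positions whose answer differs from the Bayesian answer. Because real and test ciphertexts are identically distributed, conditioned on the multiset of ciphertexts and the plaintexts, each such deviation lands on a test position with probability about $k/(k+\ell)$. Rejection on a test therefore cancels the gain with high probability. Concretely, the acceptance probability is at most $\Pr[\text{all Bayesian}]+(1-\Omega(k/(k+\ell)))\cdot\Pr[\text{deviation}]$, and choosing $k\gg \ell/\epsilon$ gives total loss below $\epsilon/10$.

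\textbf{Step 2 (no side, main obstacle).} The same test-forcing argument bounds $\val(x)$ by the value of $S^*$. What can fail is completeness: $S^*$ might decrypt poorly when $x\notin L$. We first prove the analogue of \cref{lem:decoff}, that $\Pr_{x\leftarrow\bLb_n}[\,S^*\text{ decrypts honest ciphertexts under }x\text{ with probability}\ge1-\epsilon'\,]\ge 1-\epsilon'$. The proof applies Condition~4 to an algorithm that queries the oracle on the protocol ``encrypt $r$ random bits and check the guesses''. This protocol has value about $1$ on $\bL_n$, so it must also have value at least $2/3$ on most of $\bLb_n$. Parallel repetition with majority voting then sharpens this to per-ciphertext success probability $1-\epsilon'$.

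This is the delicate point, because a per-ciphertext error of $\epsilon'$ still causes test failures with probability about $sk\epsilon'$. We would handle it by making the test tolerate a small number of errors: $V_2$ rejects only if more than a fraction $\delta$ of the tests are wrong. The thresholds are then set so that the honest and Bayesian strategies pass with probability $1-\epsilon/10$ on both sides, and deviations from $S^*$ on real positions remain costly by the same counting argument.

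\textbf{Step 3.} With these estimates, $\Pr[\tilde\Gamma^O(x)=1]$ is within $\epsilon/3$ of $\Pr[\DecYes=1]$ when $x\sim\bL_n^{\otimes s}$, and within $\epsilon/3$ of $\Pr[\DecNo=1]$ when $x\sim\bLb_n^{\otimes s}$, for every valid oracle $O$. Condition~4 of \cref{def:gap} supplies an oracle $O$ for which the two distributions differ by at most $\negl(n)$, and the triangle inequality then yields \eqref{eq:main}.
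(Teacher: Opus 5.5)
Your architecture matches the paper's: $D^*=\WEDec$, the Bayes decryptor $S^*$, the shuffled hidden-test protocol, a ``decrypt $r$ random bits'' protocol for no-instances, a threshold estimator built from parallel repetition, and Condition~4 to close. The genuine problem is Step~2, which you present as the crux. The obstacle you identify there does not exist, and the patch you propose for it is not sound as justified.

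The decryptability statement holds for \emph{every} inverse polynomial $\gamma$, with $\Pr_{x\leftarrow\bLb_n}[\Delta(x)\ge\gamma]\le\negl(n)$. So you can fix the number of tests $K\approx s^2\ell/\epsilon$ first, and only then choose $\gamma\approx\epsilon/(s^2K)$. The union bound over all $s(K+\ell)$ real and test ciphertexts then costs only $O(\epsilon)$, and the strict ``reject on any wrong test'' rule survives; this is exactly the paper's parameter choice. The matching upper bound should be measured against the true plaintexts, i.e.\ the ideal value $\iota(x)$. This gives $\val(x)\le\iota(x)+2s\ell/K$ for every $x$, good or not. Note the factor $s$, which your ``$k\gg\ell/\epsilon$'' omits.

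Your tolerant test, by contrast, is not covered by ``the same counting argument''. That argument relies on a single wrong answer landing on a test causing rejection. With a tolerance of $\delta K$ wrong tests per block, the prover gets a free error budget. For example, flipping a random set of about $\delta K/2$ answers in each block is never rejected for that reason, yet it changes each real answer with probability about $\delta/2$. In general you can then only show $\val(x)\le\iota(x)+O(s\delta\ell+s\ell^2/K)$. Soundness therefore forces $\delta=O(\epsilon/(s\ell))$, via a budget/hypergeometric argument rather than the single-error one. You also still need an inverse-polynomial bound on $\Delta(x_i)$, which the same lemma must supply anyway, so tolerance buys nothing. Choosing the thresholds only so that honest strategies pass, as you propose, does not by itself give soundness.

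The decryptability sketch also needs repair. Under \cref{def:qcipOracle}, a one-query distinguisher plus Condition~4 only shows that all but a negligible fraction of $x\leftarrow\bLb_n$ have value $>1/3$, not $\ge 2/3$, since ambiguous instances may be answered $1$. To turn this into a bound on $\Delta(x)$ you need two further facts:
\begin{itemize}
\item Conditioned on the verifier's message, the plaintexts are independent with posteriors $p_b(x,\ct_j)$. Hence the Bayes answer is pointwise optimal, and the value depends only on $\Delta(x)$; for the paper's rule it equals $\Pr[\Bin(r,\Delta(x))\le\gamma r/2]$.
\item The acceptance rule and $r\approx 16/\gamma$ must be chosen so that $\Delta(x)\ge\gamma$ forces value $\le 1/3$. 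Meanwhile, on yes-instances $\Delta(x)\le\nu$ (by correctness plus optimality of $S^*$), which gives value $\ge 2/3$.
\end{itemize}
Parallel repetition with majority voting does not ``sharpen'' the per-ciphertext success of $S^*$; it is the dependence of $r$ on $\gamma$ that does the work.
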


Before proving \cref{thm:main}, we establish the ingredients needed for the
simulation.

\subsection{The Bayes Decryptor}

We now define $S^*$. It is determined entirely by $\WEEnc$ and
does not depend on the reduction, the challenger, or $\epsilon$.

\begin{definition}[Bayes decryptor and Bayes error]\label{def:bayes}
Let $x\in\{0,1\}^n$. Let $\textbf{E}_x$ be the distribution of the pair
$(m,\ct)$ obtained by sampling a uniform bit $m\leftarrow\{0,1\}$ and then
sampling $\ct\leftarrow\WEEnc(1^n,x,m)$.

For $b\in\{0,1\}$, define
\begin{equation}\label{eq:posterior}
p_b(x,\ct):=
\frac{\Pr[\WEEnc(1^n,x,b)=\ct]}
{\Pr[\WEEnc(1^n,x,0)=\ct]+\Pr[\WEEnc(1^n,x,1)=\ct]}.
\end{equation}
If the denominator is zero, set
$p_0(x,\ct)=p_1(x,\ct):=1/2$. Thus $p_b(x,\ct)$ is the posterior probability
that the plaintext was $b$, conditioned on observing $\ct$. In particular,
\[
    p_0(x,\ct)+p_1(x,\ct)=1
\]
for every $\ct$.

The \emph{Bayes decryptor} is the deterministic classical function
\begin{equation}\label{eq:bayes-decryptor}
S^*(x,\ct):=\argmax_{b\in\{0,1\}}p_b(x,\ct),
\end{equation}
where ties are broken towards $0$. Its \emph{Bayes error} is
\begin{equation}\label{eq:bayes-error}
\Delta(x):=
\Pr_{(m,\ct)\leftarrow\textbf{E}_x}
\big[S^*(x,\ct)\neq m\big].
\end{equation}
\end{definition}

Conditioned on the ciphertext, $S^*$ errs with probability
$\min_{b\in\{0,1\}}p_b(x,\ct)$. Therefore,
\begin{equation}\label{eq:bayes-error-alt}
\Delta(x)
=
\Ex_{\ct}
\left[\min_{b\in\{0,1\}}p_b(x,\ct)\right]
\le \frac12,
\end{equation}
where $\ct$ is distributed as the second coordinate of $\textbf{E}_x$.
Evaluating $S^*$ may require unbounded computation, which is allowed in
\cref{thm:main}.

The next two elementary lemmas will be used repeatedly.

\begin{lemma}[Optimality of $S^*$]\label{lem:opt}
Let $x\in\{0,1\}^n$, let $\tau_x$ be any quantum state depending only on $x$,
and let $A$ be any, possibly computationally unbounded, quantum algorithm.
Then
\begin{equation}
  \Pr_{(m,\ct)\leftarrow\textbf{E}_x}
  \big[A(x,\ct,\tau_x)=m\big]
  \le 1-\Delta(x).
\end{equation}
\end{lemma}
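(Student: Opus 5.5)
The argument is the standard Bayes-optimality argument: condition on the ciphertext, then bound the conditional success probability by the larger posterior. Fix $x$ and the state $\tau_x$. Because $\tau_x$ depends only on $x$, it is independent of the pair $(m,\ct)\leftarrow\textbf{E}_x$. For each classical string $\ct$, the algorithm $A(x,\ct,\tau_x)$ therefore induces a distribution on outputs that depends only on $(x,\ct)$ and not on $m$. Let
\[
q_b(\ct):=\Pr\big[A(x,\ct,\tau_x)=b\big],
\]
where the probability is over $A$'s internal randomness and measurements on $\tau_x$. Then $q_0(\ct)+q_1(\ct)\le 1$, since $A$ may also output $\bot$ or other strings.

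First, write the success probability as an expectation over the ciphertext marginal of $\textbf{E}_x$:
\[
\Pr\big[A(x,\ct,\tau_x)=m\big]=\Ex_{\ct}\Big[\sum_{b\in\{0,1\}}\Pr[m=b\mid \ct]\,q_b(\ct)\Big].
\]
This uses the conditional independence of $m$ and $A$'s output given $\ct$, which is exactly where the hypothesis that $\tau_x$ depends only on $x$ enters. By Bayes' rule with the uniform prior on $m$, $\Pr[m=b\mid\ct]=p_b(x,\ct)$ as in \eqref{eq:posterior}. The zero-denominator convention is irrelevant here, because such $\ct$ have probability zero.

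Second, bound the inner sum pointwise:
\[
\sum_b p_b(x,\ct)\,q_b(\ct)\le \max_b p_b(x,\ct)=1-\min_b p_b(x,\ct).
\]
This holds because the $q_b$ are nonnegative and sum to at most $1$, and because $p_0+p_1=1$. Taking the expectation over $\ct$ and applying \eqref{eq:bayes-error-alt} gives
\[
\Pr\big[A(x,\ct,\tau_x)=m\big]\le 1-\Delta(x).
\]

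There is no real obstacle. The only point needing care is the first step: justifying that, conditioned on $\ct$, the output of $A$ is independent of $m$. This follows because the joint state given $\ct$ is the product of the classical register for $m$ with $\tau_x$, and $A$ acts only on $\ct$ and $\tau_x$.
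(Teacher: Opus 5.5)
Your proposal is correct and follows the paper's proof essentially verbatim: condition on $\ct$, use that $m$ is independent of $\tau_x$ (and hence of $A$'s output) given $\ct$, bound $\sum_b q_b(\ct)\,p_b(x,\ct)$ by $\max_b p_b(x,\ct)$, and average over $\ct$ using \eqref{eq:bayes-error-alt}. Your extra remarks are harmless refinements that the paper leaves implicit: you allow $q_0+q_1\le 1$ for outputs outside $\{0,1\}$, and you note that zero-denominator ciphertexts occur with probability zero.
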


\begin{proof}
Condition on $x$ and $\ct$. Since $\tau_x$ depends only on $x$, the
plaintext $m$ is independent of $\tau_x$ given $(x,\ct)$. Let
$\alpha_b:=\Pr[b\leftarrow A(x,\ct,\tau_x)]$ for $b\in\{0,1\}$. The
conditional probability of recovering $m$ is therefore
\[
    \sum_{b\in\{0,1\}}\alpha_b p_b(x,\ct)
    \le \max_{b\in\{0,1\}}p_b(x,\ct).
\]
Averaging over $\ct$ and using \cref{eq:bayes-error-alt} proves the claim.
\qed \end{proof}

\begin{lemma}[Per-message error]\label{lem:permsg}
For every $x\in\{0,1\}^n$ and every $m\in\{0,1\}$,
\begin{equation}
  \Pr_{\ct\leftarrow\WEEnc(1^n,x,m)}
  \big[S^*(x,\ct)\neq m\big]
  \le 2\Delta(x).
\end{equation}
\end{lemma}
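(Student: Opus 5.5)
The plan is to write the overall Bayes error as the average of the two per-message errors, and then use the fact that each of them is nonnegative. Fix $x$ and $m$, and set
\[
\Delta_m(x):=\Pr_{\ct\leftarrow\WEEnc(1^n,x,m)}\big[S^*(x,\ct)\neq m\big],
\]
the error of the Bayes decryptor conditioned on the plaintext being $m$.

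First I unpack the distribution $\textbf{E}_x$ of \cref{def:bayes}. It samples $m$ uniformly from $\{0,1\}$ and then samples $\ct\leftarrow\WEEnc(1^n,x,m)$. By the law of total probability applied to the definition \eqref{eq:bayes-error}, conditioning on the value of the uniform plaintext bit gives
\[
\Delta(x)=\tfrac12\Delta_0(x)+\tfrac12\Delta_1(x).
\]
Here $S^*$ is a fixed deterministic function, so no extra randomness has to be tracked, and the conditional distribution of $\ct$ given $m$ is exactly $\WEEnc(1^n,x,m)$.

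Second, since $\Delta_{1-m}(x)\ge 0$, dropping that term gives $\Delta(x)\ge\tfrac12\Delta_m(x)$, that is, $\Delta_m(x)\le 2\Delta(x)$. This is the claim.

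There is no real obstacle. The only point to check is that the tie-breaking convention and the convention for a zero denominator in \eqref{eq:posterior} are irrelevant. They only affect how $S^*$ is defined, and the decomposition above holds for any fixed function $S^*$. Ciphertexts that have probability zero under both messages never occur, so they contribute nothing to either conditional error.
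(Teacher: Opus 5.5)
Your proposal is correct and is essentially the paper's own argument. Both write $\Delta(x)$ as $\tfrac12$ times the sum of the two per-message error probabilities and then drop the nonnegative term for the other message, giving $\Delta_m(x)\le 2\Delta(x)$.
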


\begin{proof}
By \cref{eq:bayes-error},
\[
\Delta(x)
=
\frac12
\sum_{m\in\{0,1\}}
\Pr_{\ct\leftarrow\WEEnc(1^n,x,m)}
\big[S^*(x,\ct)\neq m\big].
\]
Both summands are non-negative, so each is at most $2\Delta(x)$.
\qed \end{proof}

\subsection{Decryptability on No-Instances}

Our first main step is to show that the Bayes error is small for almost every
no-instance drawn from $\bLb_n$. Thus, although these instances have no
witnesses, their honest ciphertexts can still be decrypted with small error by
the possibly inefficient function $S^*$.

\begin{lemma}[Statistical decryptability on no-instances]\label{lem:decoff}
For every inverse polynomial $\gamma$,
\begin{equation}
  \delta_{n,\gamma}
  :=
  \Pr_{x\leftarrow\bLb_n}
  \big[\Delta(x)\ge\gamma(n)\big]
  \le \negl(n).
\end{equation}
\end{lemma}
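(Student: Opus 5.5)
We argue by contradiction using the QCIP[2] oracle directly, with $s=1$ in Condition~4 of \cref{def:gap}. Suppose that for some inverse polynomial $\gamma$ the quantity $\delta_{n,\gamma}$ is non-negligible, say $\delta_{n,\gamma}\ge 1/q(n)$ for infinitely many $n$. We build a $\QPT$ distinguisher $A^{O}(x)$ that makes a single query $O(V_1,V_2,x)$ and outputs the answer. The protocol is the one suggested in the overview. $V_1(x)$ samples $k=k(n)$ uniform bits $m_1,\dots,m_k$ and computes $\ct_r\leftarrow\WEEnc(1^n,x,m_r)$. It keeps the $m_r$ in its private register $\rho$ and sends $y=(\ct_1,\dots,\ct_k)$. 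On the prover's answer $z_y=(a_1,\dots,a_k)$, $V_2$ accepts iff at least $(1-\gamma/2)k$ of the $a_r$ equal $m_r$. We will take $k$ a large polynomial, roughly $k\gg \gamma^{-2}$, and fix it in the last step.

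\emph{Completeness on yes-instances.} Let $x\in L_n$ with witness $|w_x\rangle$. By \cref{def:weCorrect}, each ciphertext is decrypted correctly by $\WEDec(x,\ct_r,|w_x\rangle^{\otimes t})$ except with probability $\nu(n)$. This strategy uses fresh witness copies, which form a state depending only on $x$. A union bound then shows that it answers all $k$ ciphertexts correctly with probability at least $1-k\nu\ge 2/3$. This is a randomized, state-assisted strategy, but it is an average of deterministic answer families $\{z_y\}$. Hence some deterministic family does at least as well, and $\val_{V_1,V_2}(x)\ge 2/3$. So every $O\in\mathcal{O}_{\QCIP[2]}$ answers $1$ on every $x\in L_n$.

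\emph{Soundness on no-instances with large Bayes error.} Fix $x$ with $\Delta(x)\ge\gamma$ and any deterministic answer family $\{z_y\}$. Conditioned on $y=(\ct_1,\dots,\ct_k)$, the plaintexts $m_r$ are independent, since the pairs $(m_r,\ct_r)$ are i.i.d.\ from $\mathbf{E}_x$. Moreover $m_r$ equals $b$ with probability $p_b(x,\ct_r)$. For the fixed vector $z_y$, the correctness indicators are therefore independent Bernoulli variables with means $p_{a_r}(x,\ct_r)\le \max_b p_b(x,\ct_r)$. They are thus stochastically dominated coordinatewise by the indicators for the Bayes answers $S^*(x,\ct_r)$. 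It follows that the acceptance probability is maximized by the Bayes decryptor; this is the multi-sample version of \cref{lem:opt}. Under Bayes answers, the number of correct answers is a sum of $k$ i.i.d.\ indicators with mean $1-\Delta(x)\le 1-\gamma$. By a Chernoff--Hoeffding bound, the probability of reaching $(1-\gamma/2)k$ correct answers is at most $\exp(-k\gamma^2/2)$. Choosing $k=\lceil 4/\gamma^2\rceil\cdot n$ makes this at most $1/3$, while $k\nu\le 1/3$ still holds for large $n$. Hence $\val_{V_1,V_2}(x)\le 1/3$, and every valid $O$ answers $0$.

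Combining the two cases, for every $O\in\mathcal{O}_{\QCIP[2]}$ we have $\Pr_{x\leftarrow\bL_n}[A^O(x)=1]=1$. Meanwhile $\Pr_{x\leftarrow\bLb_n}[A^O(x)=1]\le 1-\delta_{n,\gamma}$. The advantage is therefore at least $\delta_{n,\gamma}$ against every valid oracle, which contradicts Condition~4 of \cref{def:gap} with $s=1$. The main obstacle is the soundness step. The prover sees all $k$ ciphertexts jointly, and the acceptance condition is a threshold rather than a product of per-coordinate events. So we must justify that no correlated answer strategy beats independent Bayes guessing. We handle this via conditional independence of the plaintexts given the ciphertext tuple, together with monotonicity of the threshold event under stochastic domination.
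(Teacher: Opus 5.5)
Your proof is correct and follows essentially the same route as the paper: the same $k$-ciphertext threshold protocol, Bayes-optimality of the pointwise response via conditional independence of the plaintexts plus a monotone coupling, and a single-query distinguisher fed into Condition~4 of \cref{def:gap}. The differences are cosmetic. The paper gets completeness by first showing $\Delta(x)\le\nu(n)$ on yes-instances via \cref{lem:opt}, then applying Markov's inequality and a multiplicative Chernoff bound with $r=\lceil 16/\gamma\rceil$; you instead use a direct union bound over honest decryptions together with an averaging argument, and Hoeffding's inequality with $k\approx 4n/\gamma^2$.
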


\begin{proof}
We first show that the Bayes error is negligible on every yes-instance.

\smallskip
\noindent\textbf{Step 1: $\Delta(x)\le\nu(n)$ for every
$x\in L\cap\{0,1\}^n$.}
Let $x\in L\cap\{0,1\}^n$, and let $|w_x\rangle$ satisfy
$(x,|w_x\rangle)\in R_{L,Q}$. Apply \cref{lem:opt} with
$\tau_x=|w_x\rangle^{\otimes t}$ and $A=\WEDec$. By correctness,
$\WEDec$ recovers a uniformly random plaintext with probability at least
$1-\nu(n)$. Hence
\[
    1-\nu(n)\le 1-\Delta(x),
\]
and therefore $\Delta(x)\le\nu(n)$.

\smallskip
\noindent\textbf{Step 2: a two-message protocol.}
Fix $\gamma$ and set
\[
    r:=\left\lceil\frac{16}{\gamma(n)}\right\rceil.
\]
Since $\gamma$ is inverse polynomial, $r$ is polynomial. Define $\QPT$
algorithms $V_1,V_2$ as follows:
\begin{itemize}
  \item $V_1(x)$: for each $j\in[r]$, sample
  $m_j\leftarrow\{0,1\}$ and
  $\ct_j\leftarrow\WEEnc(1^n,x,m_j)$. Send
  \[
      y=(\ct_1,\ldots,\ct_r)
  \]
  and keep
  \[
      \rho=(m_1,\ldots,m_r)
  \]
  private.

  \item $V_2(x,\rho,y,z_y)$: parse
  $z_y=(b_1,\ldots,b_r)\in\{0,1\}^r$ and accept if and only if
  \[
      \big|\{j\in[r]:b_j\neq m_j\}\big|
      \le \frac{\gamma r}{2}.
  \]
\end{itemize}

\smallskip
\noindent\textbf{Step 3: the value of the protocol.}
We claim that, for every $x$,
\begin{equation}\label{eq:protoval}
  \val_{V_1,V_2}(x)
  =
  \Pr\left[Z\le\frac{\gamma r}{2}\right],
  \qquad
  Z\sim\Bin(r,\Delta(x)).
\end{equation}

Fix a verifier message $y=(\ct_1,\ldots,\ct_r)$ and write
\[
    \mu_j:=\min_{b\in\{0,1\}}p_b(x,\ct_j).
\]
The pairs $(m_j,\ct_j)$ are independent samples from $\textbf{E}_x$.
Therefore, conditioned on $y$, the plaintext bits $m_1,\ldots,m_r$ remain
independent and
\[
    \Pr[m_j=b\mid y]=p_b(x,\ct_j).
\]

Now fix a response $z=(b_1,\ldots,b_r)$. Conditioned on $y$, the error in
position $j$ has probability
\[
    q_j:=1-p_{b_j}(x,\ct_j)\ge\mu_j,
\]
with equality when $b_j=S^*(x,\ct_j)$. Let
$U_1,\ldots,U_r$ be independent uniform random variables on $[0,1]$ and set
\[
    X_j=\mathbf{1}[U_j\le q_j],
    \qquad
    X_j^*=\mathbf{1}[U_j\le\mu_j].
\]
Then $X_j^*\le X_j$ pointwise for every $j$. Hence
\[
\Pr\left[\sum_jX_j\le\frac{\gamma r}{2}\,\middle|\,y\right]
\le
\Pr\left[\sum_jX_j^*\le\frac{\gamma r}{2}\,\middle|\,y\right].
\]
Thus, for every fixed $y$, the optimal response is
\[
    z_y^*
    =
    \big(S^*(x,\ct_1),\ldots,S^*(x,\ct_r)\big).
\]
By the definition of $\QCIP[2]$, the prover may choose its response separately
for each verifier message $y$, so this pointwise choice attains the protocol
value.

Under this strategy, the error indicators are
$\mathbf{1}[S^*(x,\ct_j)\neq m_j]$. Before conditioning on $y$, these are
independent Bernoulli variables with mean $\Delta(x)$. Their sum is therefore
distributed as $\Bin(r,\Delta(x))$, proving \cref{eq:protoval}.

\smallskip
\noindent\textbf{Step 4: the two cases.}
For all sufficiently large $n$:
\begin{itemize}
  \item If $x\in L\cap\{0,1\}^n$, then $\Delta(x)\le\nu(n)$ by Step~1.
  Hence $\Ex[Z]\le\nu r$, and Markov's inequality gives
  \[
      \Pr\left[Z>\frac{\gamma r}{2}\right]
      \le \frac{2\nu}{\gamma}
      =\negl(n).
  \]
  Therefore $\val_{V_1,V_2}(x)\ge2/3$.

  \item If $x\notin L$ and $\Delta(x)\ge\gamma$, then
  $\Ex[Z]=\Delta(x)r\ge\gamma r$. Thus
  $\gamma r/2\le\Ex[Z]/2$, and the multiplicative Chernoff bound gives
  \[
      \Pr\left[Z\le\frac{\gamma r}{2}\right]
      \le
      e^{-\Ex[Z]/8}
      \le
      e^{-\gamma r/8}
      \le e^{-2}
      <\frac13.
  \]
  Hence $\val_{V_1,V_2}(x)\le1/3$.
\end{itemize}

\smallskip
\noindent\textbf{Step 5: applying the gap assumption.}
Let $A^O(x)$ submit $(V_1,V_2,x)$ to the oracle $O$ and output the returned
bit. This is a $\QPT$ algorithm making one classical oracle query. By
\cref{def:qcipOracle} and Step~4, for all sufficiently large $n$ and every
$O\in\mathcal{O}_{\QCIP[2]}$,
\[
    \Pr_{x\leftarrow\bL_n}[A^O(x)=1]=1
\]
and
\[
    \Pr_{x\leftarrow\bLb_n}[A^O(x)=1]
    \le 1-\delta_{n,\gamma}.
\]
Condition~4 of \cref{def:gap} gives an oracle
$O\in\mathcal{O}_{\QCIP[2]}$ for which these two probabilities differ by at
most a negligible amount. For that oracle their difference is at least
$\delta_{n,\gamma}$. Therefore
\[
    \delta_{n,\gamma}\le\negl(n),
\]
as required.
\qed \end{proof}

\subsection{Estimating the Ideal Experiment}\label{sec:ideal}

Fix the two-stage process $\Gamma=(\Gamma_1,\Gamma_2)$ defined at the
beginning of this section. Recall that every real plaintext is sampled as a
fresh uniform bit. More precisely, for every
$i\in[s]$ and $j\in[\ell]$, the corresponding real ciphertext is generated as
\[
    \widetilde m_{i,j}\leftarrow\{0,1\},
    \qquad
    \ct_{i,j}\leftarrow
    \WEEnc(1^n,x_i,\widetilde m_{i,j}).
\]

For an instance tuple $x=(x_1,\ldots,x_s)$, define the \emph{ideal
experiment} $\Ideal_\Gamma(x)$ as follows. Run
\[
    (\ct_{i,j})_{i,j}
    \leftarrow \Gamma_1(1^n,x),
\]
and let $\widetilde m=(\widetilde m_{i,j})_{i,j}$ be the uniform plaintext
bits used to generate the real ciphertexts in this execution. Instead of
querying a decryptor, give $\widetilde m$ to the second stage in its original
$(i,j)$ positions and output
\[
    \Gamma_2(x,\ct,\widetilde m).
\]
Write
\begin{equation}\label{eq:iota}
    \iota(x):=\Pr[\Ideal_\Gamma(x)=1].
\end{equation}

The ideal experiment is only a proof device. In particular, the plaintext
vector need not be available to the reduction or to an efficient algorithm.
We next show that $\iota(x)$ can nevertheless be estimated by a $\QPT$
algorithm with access to a $\QCIP[2]$ oracle.

To do this, the verifier hides the real ciphertexts among test ciphertexts. It does not
need to know the real plaintexts. In each block, it adds the same number of
test encryptions of $0$ and of $1$.

Fix an even polynomial $K=K(n)$ with $K\geq\ell$. Define $\QPT$ algorithms
$(V^\Gamma_1,V^\Gamma_2)$ as follows.

\begin{center}
\fbox{\begin{minipage}{0.92\textwidth}
$V^\Gamma_1(x)$ on input $x=(x_1,\ldots,x_s)$:
\begin{enumerate}
  \item Run
  \[
      (\ct_{i,j})_{i,j}
      \leftarrow\Gamma_1(1^n,x).
  \]
  These are the \emph{real} ciphertexts. Their plaintexts are not needed by
  the verifier.

  \item For every $i\in[s]$, generate $K/2$ test encryptions of $0$ and
  $K/2$ test encryptions of $1$:
  \[
      \ct'_{i,c,r}\leftarrow\WEEnc(1^n,x_i,c),
      \qquad
      c\in\{0,1\},\quad r\in[K/2].
  \]
  Every test ciphertext is generated honestly.

  \item For every $i\in[s]$, choose an independent uniformly random
  permutation $\pi_i$ of the $K+\ell$ ciphertexts in block $i$. Let $y_i$
  be the resulting shuffled list.

  \item Send $y=(y_1,\ldots,y_s)$. Keep the permutations, the
  real and test positions, and the plaintexts of the test ciphertexts
  private in a state $\rho$.
\end{enumerate}

\medskip
$V^\Gamma_2(x,\rho,y,z_y)$: parse $z_y$ as one bit for every shuffled
ciphertext. Reject if any test ciphertext is answered incorrectly, i.e. does not match its plaintext. Otherwise,
extract the answers corresponding to the real ciphertexts, restore their
original $(i,j)$ order, call the resulting vector $a$, and output
$\Gamma_2(x,\ct,a)$.
\end{minipage}}
\end{center}

\begin{lemma}[Value of the hidden-test protocol]\label{lem:hidden}
For every $x=(x_1,\ldots,x_s)$,
\begin{equation}\label{eq:hidden-upper}
    \val_{V^\Gamma_1,V^\Gamma_2}(x)
    \leq
    \iota(x)+\frac{2s\ell}{K}.
\end{equation}
Moreover, if $\Delta(x_i)\leq\gamma$ for every $i\in[s]$, then
\begin{equation}\label{eq:hidden-lower}
    \val_{V^\Gamma_1,V^\Gamma_2}(x)
    \geq
    \iota(x)-s(K+\ell)\gamma
    \geq
    \iota(x)-2sK\gamma.
\end{equation}
\end{lemma}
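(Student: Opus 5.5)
The plan is to bound the acceptance probability of an arbitrary prover from above by comparing $V^\Gamma_2$ with the ideal experiment, and to bound it from below by exhibiting the Bayes prover $S^*$. Throughout I work with the joint distribution of $V^\Gamma_1$'s randomness, namely the execution of $\Gamma_1$, the real plaintexts $\widetilde m$, the test plaintexts, and the permutations $\pi_i$. For a fixed response family $\{z_y\}_y$, this distribution determines the answer vector $a$ on the real positions together with the test answers.

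\emph{Upper bound \eqref{eq:hidden-upper}.} Fix any prover $\{z_y\}_y$. $V^\Gamma_2$ outputs $1$ only if all tests are answered correctly and $\Gamma_2(x,\ct,a)=1$. I split on whether $a=\widetilde m$.
\begin{itemize}
\item On the event $a=\widetilde m$, the output is exactly $\Gamma_2(x,\ct,\widetilde m)$, computed from the same execution of $\Gamma_1$ as in $\Ideal_\Gamma(x)$. This contributes at most $\iota(x)$.
\item It remains to bound $\Pr[a\neq\widetilde m \wedge \text{all tests correct}]$ by $2s\ell/K$.
\end{itemize}
The key step, and the main obstacle, is the exchangeability claim. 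Condition on the sent message $y$ and on the full vector of plaintexts of all shuffled ciphertexts (real and test). In each block $i$, the real and test ciphertexts with plaintext $c$ are independent honest samples of $\WEEnc(1^n,x_i,c)$, and $\pi_i$ is uniform. Hence, conditioned on this data, the set of real positions among the positions of plaintext $c$ in block $i$ is a uniformly random subset of the appropriate size.

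This must be checked carefully, because $\Gamma_1$'s internal state $st$ is correlated with the real ciphertexts. The point is that $y$ contains only ciphertexts, and $\Gamma_1$ produces them by honest encryption of fresh uniform bits (\cref{def:wered}). So after marginalizing $st$, the conditional law of the labeling is the claimed uniform one.

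Given this conditioning, $z_y$ is fixed, so the set $W$ of wrongly answered positions is fixed. The bad event requires $W\neq\emptyset$ and $W$ contained in the real positions. Pick any one $w\in W$, say in block $i$ with plaintext $c$. The probability that $w$ is real is $r_c/(K/2+r_c)\le 2\ell/K$, where $r_c\le\ell$ is the number of real ciphertexts of plaintext $c$ in block $i$. Averaging gives a bound of $2\ell/K\le 2s\ell/K$, which proves \eqref{eq:hidden-upper}.

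\emph{Lower bound \eqref{eq:hidden-lower}.} Consider the prover that answers each shuffled ciphertext in block $i$ with $S^*(x_i,\cdot)$; this response depends only on $y$, so it is a legitimate strategy. Whenever every answer is correct, all tests pass, $a=\widetilde m$, and the output equals that of $\Ideal_\Gamma(x)$ on the same execution. Therefore
\[
\val\ge\iota(x)-\Pr[\text{some answer is wrong}].
\]
I bound the failure probability by a union bound.
\begin{itemize}
\item Test ciphertexts: by \cref{lem:permsg}, the $K/2$ encryptions of $0$ and $K/2$ encryptions of $1$ in block $i$ contribute $\tfrac K2\big(\Pr[S^*\neq0\mid 0]+\Pr[S^*\neq1\mid 1]\big)=K\Delta(x_i)\le K\gamma$. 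This uses the identity from the proof of \cref{lem:permsg}.
\item Real ciphertexts: each has a uniform plaintext, so each errs with probability exactly $\Delta(x_i)\le\gamma$, contributing $\ell\gamma$ per block.
\end{itemize}
Summing over the $s$ blocks gives an error of at most $s(K+\ell)\gamma$, and $\ell\le K$ yields the final bound $2sK\gamma$.
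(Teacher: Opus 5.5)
Your proposal is correct and follows essentially the same route as the paper: you couple with the ideal experiment on the all-answers-correct event, take the Bayes prover $S^*$ with a union bound for the lower bound, and for the upper bound condition on $y$ and the plaintext labels and use that the real positions form a uniform subset of the same-plaintext positions in each block. The one difference is that you choose a single erroneous position globally instead of union-bounding over the $s$ blocks. This is valid because the blocks are independent, and it gives the sharper bound $2\ell/K$, which implies the stated $2s\ell/K$.
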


\begin{proof}
Let $W$ be the event that every answer, on both real and test ciphertexts,
equals the corresponding true plaintext.

Couple the protocol and the ideal experiment so that they use the same
execution of $\Gamma_1$. In particular, they use the same real ciphertexts,
the same latent plaintext vector $\widetilde m$. On the event $W$, the vector supplied to $\Gamma_2$ is exactly
$\widetilde m$ in its original order. Therefore,
\begin{equation}\label{eq:Wevents}
    \Pr[V^\Gamma_2\text{ accepts}\wedge W]
    =
    \Pr[\Ideal_\Gamma(x)=1\wedge W].
\end{equation}

\smallskip
\noindent\emph{Lower bound.}
Consider the prover that answers every ciphertext $c$ in block $i$ with
$S^*(x_i,c)$. For $c\in\{0,1\}$, write
\[
    e_{i,c}
    :=
    \Pr_{\ct\leftarrow\WEEnc(1^n,x_i,c)}
    [S^*(x_i,\ct)\neq c].
\]
By the definition of the Bayes error,
\[
    \Delta(x_i)=\frac{e_{i,0}+e_{i,1}}{2}.
\]
At each real position in block $i$, conditioned on the preceding transcript,
the plaintext is uniform and the ciphertext is a fresh honest encryption.
Consequently, the error probability at that position is $\Delta(x_i)$, and
the sum of the error probabilities at the $\ell$ real positions is
$\ell\Delta(x_i)$. The block also contains $K/2$ tests of each bit,
whose error probabilities sum to $K\Delta(x_i)$. Thus, a union bound over all
ciphertexts gives
\[
    \Pr[\neg W]
    \leq
    \sum_{i=1}^s(K+\ell)\Delta(x_i)
    \leq
    s(K+\ell)\gamma
    \leq
    2sK\gamma,
\]
where the last inequality uses $K\geq\ell$. Using \cref{eq:Wevents},
\begin{align}
\val_{V^\Gamma_1,V^\Gamma_2}(x)
&\geq
\Pr[V^\Gamma_2\text{ accepts}\wedge W]\\
&=
\Pr[\Ideal_\Gamma(x)=1\wedge W]\\
&\geq
\iota(x)-s(K+\ell)\gamma.
\end{align}

\smallskip
\noindent\emph{Upper bound.}
Fix an arbitrary prover strategy. Since the verifier rejects whenever a test
ciphertext is answered incorrectly,
\begin{align}
\Pr[\mathrm{accept}]
&\leq
\Pr[\mathrm{accept}\wedge W]
+
\Pr[\text{all tests correct}\wedge\neg W]\\
&\leq
\iota(x)+\Pr[\mathcal B],
\label{eq:hidden-B}
\end{align}
where $\mathcal B$ is the event that every test ciphertext is answered
correctly but at least one real ciphertext is answered incorrectly.

Fix a block $i$. For $c\in\{0,1\}$, let
\[
    R_{i,c}
    :=
    \big|\{j\in[\ell]:\widetilde m_{i,j}=c\}\big|.
\]
This quantity is used only in the analysis. The verifier does not need to
know it.

For the purpose of an upper bound, reveal to the prover the plaintext label
of every ciphertext in every block. This can only make the prover stronger.
Since its response is classical, randomizing the response cannot improve its
success probability. Condition on the public message $y$, all these plaintext
labels, and a fixed response. This fixes its error positions and the values
$R_{i,c}$.

There are $K/2+R_{i,c}$ ciphertexts with plaintext $c$ in block $i$.
Exactly $R_{i,c}$ of them are real. Conditioned on the ciphertexts and their
plaintext labels, the real positions form a uniformly random
$R_{i,c}$-subset of these $K/2+R_{i,c}$ positions. Indeed, all real and test
ciphertexts with plaintext $c$ are independent samples from the same
distribution $\WEEnc(1^n,x_i,c)$, and the final permutation is uniform.

If the prover makes at least one error in block $i$, choose one erroneous
position $u$ according to a fixed rule, and let $c$ be its plaintext. If all
tests in block $i$ are nevertheless answered correctly, then $u$ must be a
real position. Hence,
\begin{align}
&\Pr[
  \text{some real answer in block }i\text{ is wrong}
  \ \wedge\
  \text{all tests in block }i\text{ are correct}
  \mid y,\text{ labels},z_y
]\\
&\qquad\leq
\Pr[u\text{ is real}\mid y,\text{ labels},z_y]\\
&\qquad=
    \frac{R_{i,c}}{K/2+R_{i,c}}
\leq
    \frac{\ell}{K/2+\ell}
\leq
    \frac{2\ell}{K}.
\end{align}
If the prover makes no error in block $i$, the event on the left is empty and
the same bound holds.

Since $z_y$ depends only on the revealed view and not on the hidden origins,
conditioning on it does not change the uniform-subset property. Averaging
over $y$, the plaintext labels, and the response preserves the bound. Finally,
$\mathcal B$ implies that this event occurs in at least one block. A union
bound over $i\in[s]$ gives
\[
    \Pr[\mathcal B]\leq\frac{2s\ell}{K}.
\]
Combining this with \cref{eq:hidden-B} proves
\cref{eq:hidden-upper}.
\qed \end{proof}

To recover the numerical value of the protocol using only the decision
oracle, we query several thresholds. Fix a positive inverse polynomial
$\eta$. For all sufficiently large $n$, assume $\eta\leq1$. Set
\[
    M:=\left\lceil\frac{1}{\eta^2}\right\rceil.
\]
For $\theta\in[0,1]$, let
$(V^{\Gamma,\theta}_1,V^{\Gamma,\theta}_2)$ be the protocol that runs $M$
independent copies of $(V^\Gamma_1,V^\Gamma_2)$ in parallel and accepts if
at least $\theta M$ of them accept.

\begin{lemma}[Value of the repeated protocol]\label{lem:repeat}
For every $x$ and every $\theta\in[0,1]$,
\begin{equation}
    \val_{V^{\Gamma,\theta}_1,V^{\Gamma,\theta}_2}(x)
    =
    \Pr[\Bin(M,v)\geq\theta M],
    \qquad
    v:=\val_{V^\Gamma_1,V^\Gamma_2}(x).
\end{equation}
\end{lemma}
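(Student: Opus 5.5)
We prove the two inequalities separately, with $v:=\val_{V^\Gamma_1,V^\Gamma_2}(x)$. The repeated protocol's verifier message is $Y=(y^{(1)},\ldots,y^{(M)})$, with the copies generated independently by $V_1^\Gamma$. A prover response is any function $Y\mapsto (z^{(1)},\ldots,z^{(M)})$, and $z^{(k)}$ may depend on all of $Y$, not only on $y^{(k)}$. We must show that the best such joint strategy achieves exactly $\Pr[\Bin(M,v)\ge\theta M]$.

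\emph{Lower bound (achievability).} For each possible message $y$ of a single copy, fix an optimal single-copy response $z^*_y$ attaining $v$. The maximum over the finite response space exists, since the message and response lengths are polynomially bounded. Let the prover answer copy $k$ with $z^*_{y^{(k)}}$. The copies use independent verifier randomness and independent private states $\rho^{(k)}$, and each answer depends only on its own copy's message. Hence the acceptance indicators of the $M$ copies are independent Bernoulli variables, each with mean exactly $v$. Their sum is therefore distributed as $\Bin(M,v)$, which gives $\val\ge\Pr[\Bin(M,v)\ge\theta M]$.

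\emph{Upper bound.} This is the main step, since a joint strategy could correlate answers across copies. Fix $Y$, and consider the posterior over the private states $(\rho^{(k)})_k$ and the verifiers' internal randomness. These remain independent across $k$ conditioned on $Y$, because the copies are generated independently and each $y^{(k)}$ depends only on its own copy. Fix a deterministic joint response $(z^{(1)},\ldots,z^{(M)})$; randomized responses are convex combinations and cannot help. Conditioned on $Y$, copy $k$ accepts with probability $q_k:=\Pr[\text{copy }k\text{ accepts}\mid y^{(k)},z^{(k)}]$, and these events are independent across $k$. Let $q^*_k:=\max_z \Pr[\text{accept}\mid y^{(k)},z]$, the value achieved by $z^*_{y^{(k)}}$. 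Then $q_k\le q^*_k$.

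Use the same uniform coupling as in Step~3 of \cref{lem:decoff}. Take independent uniform $U_k$, and set $X_k=\mathbf 1[U_k\le q_k]$ and $X^*_k=\mathbf 1[U_k\le q^*_k]$. Then $X_k\le X^*_k$ pointwise, and the event $\{\sum X\ge\theta M\}$ is monotone, so $\Pr[\sum_k X_k\ge\theta M\mid Y]\le\Pr[\sum_k X^*_k\ge\theta M\mid Y]$. The pointwise choice $z^{(k)}=z^*_{y^{(k)}}$ is therefore optimal for every $Y$. The optimal value thus equals the value of the product strategy from the lower bound, namely $\Pr[\Bin(M,v)\ge\theta M]$.

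The one subtlety is that $V^\Gamma_2$ runs the quantum algorithm $\Gamma_2$ on a quantum $\rho$. The independence claim is really about independent quantum registers. Since $V^\Gamma_2$ on copy $k$ acts only on $\rho^{(k)}$ together with the classical $y^{(k)}$ and $z^{(k)}$, and the joint state conditioned on $Y$ is a product state, the acceptance outcomes are independent with the stated probabilities $q_k$. This is what the coupling argument needs.
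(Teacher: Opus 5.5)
Your proposal is correct and takes essentially the same route as the paper: condition on the verifier messages, use conditional independence of the copies, reduce to deterministic responses, and use monotonicity of the threshold event in each copy's success probability to show that the coordinatewise-optimal product strategy is optimal, which yields $\Bin(M,v)$. The only difference is presentation: you make the monotonicity step explicit with the uniform coupling and spell out the product-state structure of the private registers, whereas the paper simply asserts both.
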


\begin{proof}
For a verifier message $y$, define
\[
    q(y)
    :=
    \max_{z_y}
    \Pr[V^\Gamma_2(x,\rho,y,z_y)=1\mid y].
\]
Then
\[
    v=\mathbb E_y[q(y)].
\]

Now consider the $M$-fold protocol and fix all verifier messages
$y^{(1)},\ldots,y^{(M)}$. Conditioned on these messages, the private verifier
states are independent. For any fixed response vector, the $M$ acceptance
indicators are therefore independent. Their success probabilities depend
coordinatewise on the responses supplied to the corresponding copies.

The prover's response is classical, and its success probability is linear in
any distribution over response vectors. Thus, it suffices to consider a fixed
response vector for every tuple of verifier messages.

The probability that at least $\theta M$ copies accept is nondecreasing in
each of these success probabilities. Hence an optimal prover maximizes every
coordinate separately. It attains success probability $q(y^{(r)})$ in copy
$r$. A response in one copy cannot improve the acceptance probability of a
different copy.

After averaging over the independent verifier messages, the $M$ acceptance
indicators are independent and identically distributed Bernoulli variables.
Each has success probability
\[
    \mathbb E_y[q(y)]=v.
\]
Their sum therefore has distribution $\Bin(M,v)$, proving the claim.
\qed \end{proof}

We can now define the estimator. Let
\[
    J:=\left\lfloor\frac{1}{\eta}\right\rfloor.
\]

\begin{center}
\fbox{\begin{minipage}{0.92\textwidth}
$\widetilde{\Gamma}^{O}(x)$:
\begin{enumerate}
  \item For every $j\in\{0,1,\ldots,J\}$, submit
  \[
      \big(V^{\Gamma,j\eta}_1,V^{\Gamma,j\eta}_2,x\big)
  \]
  to the oracle $O$, and let
  \[
      c_j:=
      O\big(V^{\Gamma,j\eta}_1,V^{\Gamma,j\eta}_2,x\big)
  \]
  be its output.

  \item Set
  \[
      \widehat v:=\eta\max\{j:c_j=1\}.
  \]

  \item Output $1$ with probability $\widehat v$, and output $0$ otherwise.
\end{enumerate}
\end{minipage}}
\end{center}

\begin{lemma}[Accuracy of the estimator]\label{lem:est}
For every $O\in\mathcal{O}_{\QCIP[2]}$ and every $x$,
\begin{equation}
    \left|
    \Pr[\widetilde{\Gamma}^{O}(x)=1]
    -
    \val_{V^\Gamma_1,V^\Gamma_2}(x)
    \right|
    \leq2\eta.
\end{equation}
Moreover, $\widetilde{\Gamma}$ is $\QPT$ and makes $J+1$ classical queries
to $O$.
\end{lemma}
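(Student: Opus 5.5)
The plan is to compare each oracle answer $c_j$ with the true value $v:=\val_{V^\Gamma_1,V^\Gamma_2}(x)$. By \cref{lem:repeat}, the value of the $j$-th threshold protocol is $\Pr[\Bin(M,v)\geq j\eta M]$. A concentration bound will then show that every valid oracle $O\in\mathcal{O}_{\QCIP[2]}$ is forced to answer $1$ for thresholds well below $v$ and $0$ for thresholds well above $v$. This pins down $\widehat v$ to within $2\eta$ of $v$.

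\textbf{Step 1 (concentration).} Let $B\sim\Bin(M,v)$, so $\mathrm{Var}(B/M)=v(1-v)/M\leq 1/(4M)$. Since $M\geq 1/\eta^2$, Chebyshev's inequality gives
\[
\Pr\bigl[|B/M-v|\geq\eta\bigr]\leq \frac{1}{4M\eta^2}\leq\frac14<\frac13 .
\]

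\textbf{Step 2 (consequences for the oracle answers).} Consider a threshold $\theta=j\eta$ with $j\in\{0,\dots,J\}$.
\begin{itemize}
\item If $j\eta\leq v-\eta$, the protocol value is at least $\Pr[B/M>v-\eta]\geq 3/4\geq 2/3$. By \cref{def:qcipOracle}, every valid $O$ outputs $c_j=1$.
\item If $j\eta\geq v+\eta$, the value is at most $\Pr[B/M\geq v+\eta]\leq 1/4\leq 1/3$, so $c_j=0$.
\end{itemize}
Also, $c_0=1$ always, since the threshold-$0$ protocol accepts with certainty and so has value $1$. Hence the set $\{j:c_j=1\}$ is nonempty and $\widehat v$ is well defined, with $0\leq\widehat v\leq J\eta\leq 1$, so Step~3 of $\widetilde\Gamma$ makes sense.

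\textbf{Step 3 (pinning $\widehat v$).} For the upper bound, any $j$ with $c_j=1$ must satisfy $j\eta<v+\eta$, so $\widehat v<v+\eta$.

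For the lower bound there are two cases.
\begin{itemize}
\item If $v<2\eta$, then trivially $\widehat v\geq 0>v-2\eta$.
\item Otherwise, let $j^*:=\lfloor (v-\eta)/\eta\rfloor\geq 1$. We have $j^*\leq J$ because $v\leq 1$. Since $j^*\eta\leq v-\eta$, Step~2 gives $c_{j^*}=1$, so $\widehat v\geq j^*\eta>v-2\eta$.
\end{itemize}
Since $\Pr[\widetilde\Gamma^O(x)=1]=\widehat v$ (the randomness being over the final coin, with $\widehat v$ determined by the oracle's answers), we conclude $|\Pr[\widetilde\Gamma^O(x)=1]-v|\leq 2\eta$. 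This holds for every valid $O$, because only the promise-forced answers were used.

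\textbf{Step 4 (efficiency).}
\begin{itemize}
\item $\Gamma_1$ and $\Gamma_2$ are $\QPT$, since $C$ and $\Sigma$ are, and $\WEEnc$ is $\QPT$.
\item $K$, $M$ and $J$ are polynomial, so each $(V^{\Gamma,j\eta}_1,V^{\Gamma,j\eta}_2)$ has a polynomial-size description that $\widetilde\Gamma$ can write down.
\item $\widetilde\Gamma$ makes exactly $J+1$ classical queries and then flips one coin of bias $\widehat v$.
\end{itemize}

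The only delicate points are the edge cases: the threshold $\theta=0$ and small $v$, which are handled by $c_0=1$, and making sure no argument relies on the oracle's behaviour at ambiguous thresholds. Step~2 is built to avoid that, and it is the step I would check most carefully.
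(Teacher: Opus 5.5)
Your proof is correct and follows essentially the same route as the paper. Both arguments use \cref{lem:repeat} plus a concentration bound to force every valid oracle to answer $c_j=1$ when $j\eta\le v-\eta$ and $c_j=0$ when $j\eta\ge v+\eta$, note that $c_0=1$, and then bound $\widehat v$ between $v-2\eta$ and $v+\eta$. The only differences are minor. You use Chebyshev, giving the thresholds $3/4$ and $1/4$, where the paper uses Hoeffding, giving $1-e^{-2}$ and $e^{-2}$; both suffice because $M\ge 1/\eta^2$. You also split the lower-bound case at $v<2\eta$ instead of $v<\eta$.
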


\begin{proof}
Write
\[
    v:=\val_{V^\Gamma_1,V^\Gamma_2}(x).
\]
By \cref{lem:repeat} and Hoeffding's inequality, if
$v\geq\theta+\eta$, then
\[
    \val_{V^{\Gamma,\theta}_1,V^{\Gamma,\theta}_2}(x)
    \geq
    1-e^{-2\eta^2M}
    \geq
    1-e^{-2}
    >
    \frac23.
\]
Similarly, if $v\leq\theta-\eta$, then
\[
    \val_{V^{\Gamma,\theta}_1,V^{\Gamma,\theta}_2}(x)
    \leq
    e^{-2\eta^2M}
    \leq
    e^{-2}
    <
    \frac13.
\]
Therefore, by \cref{def:qcipOracle}, every valid oracle $O$ satisfies
\[
    c_j=1
    \quad\text{whenever}\quad
    j\eta\leq v-\eta,
\]
and
\[
    c_j=0
    \quad\text{whenever}\quad
    j\eta\geq v+\eta.
\]
We do not use the oracle's answers at the remaining thresholds.

The query with $j=0$ always returns $c_0=1$, since the repeated protocol with
threshold $0$ accepts unconditionally. Hence $\widehat v$ is well defined.
Every $j$ with $c_j=1$ satisfies $j\eta<v+\eta$, so
\[
    \widehat v\leq v+\eta.
\]
If $v\geq\eta$, let $j^\star$ be the largest index satisfying
$j^\star\eta\leq v-\eta$. Such an index belongs to
$\{0,\ldots,J\}$. By maximality,
\[
    c_{j^\star}=1
    \qquad\text{and}\qquad
    j^\star\eta>v-2\eta.
\]
If $v<\eta$, then $\widehat v\geq0>v-2\eta$. Thus, in all cases,
\[
    |\widehat v-v|\leq2\eta.
\]
Since the estimator outputs $1$ with probability $\widehat v$, this proves
the first claim.

Finally, $J$ and $M$ are polynomial because $\eta$ is inverse polynomial.
Each queried verifier is $\QPT$, so $\widetilde{\Gamma}$ is $\QPT$ and makes
exactly $J+1$ classical oracle queries.
\qed \end{proof}

\subsection{Proof of \cref{thm:main}}

\begin{proof}
Fix the induced two-stage process $\Gamma=(\Gamma_1,\Gamma_2)$ and set
\begin{equation}\label{eq:parameters}
  K:= 2
  \left\lceil\frac{256s^2\ell}{\epsilon}\right\rceil,
  \qquad
  \gamma:=
  \frac{\epsilon}{512s^2K},
  \qquad
  \eta:=
  \frac{\epsilon}{256s}.
\end{equation}
For all sufficiently large $n$, $K\ge\ell$. The parameters
$K$, $M=\lceil1/\eta^2\rceil$, and $J=\lfloor1/\eta\rfloor$ are polynomial,
while $\gamma$ and $\eta$ are inverse polynomial.

Call an instance $x$ \emph{good} if $\Delta(x)\le\gamma(n)$. By Step~1 of
the proof of \cref{lem:decoff}, every
$x\in L\cap\{0,1\}^n$ is good for all sufficiently large $n$, because
$\nu(n)\le\gamma(n)$ eventually. By \cref{lem:decoff},
\[
    \Pr_{x\leftarrow\bLb_n}[x\text{ is not good}]
    \le\negl(n).
\]
Hence a tuple sampled from either $\bL_n^{\otimes s}$ or
$\bLb_n^{\otimes s}$ has all coordinates good except with negligible
probability.

We use the following coupling observation repeatedly. The yes, no, and ideal
experiments all use the same first-stage process $\Gamma_1$ and then apply
the same second-stage process $\Gamma_2$ to an answer vector. Whenever two experiments produce the same answer vector, the input
to $\Gamma_2$ is identical. Therefore the difference between their output
probabilities is at most the probability that their answer vectors differ.

\smallskip
\noindent\textbf{Claim 1.}
\begin{equation}\label{eq:claim1}
\left|
\Pr[\DecYes^{\SampYes}_{\Gamma,D}(1^n)=1]
-
\Ex_{x\leftarrow\bL_n^{\otimes s}}[\iota(x)]
\right|
\le
s\ell\,\nu(n).
\end{equation}

In the yes experiment, the answer to query $(i,j)$ is
\[
    a_{i,j}
    =
    D^*(x_i,\ct_{i,j},|\psi_i\rangle^{\otimes t}),
\]
while the corresponding ideal answer is $\widetilde m_{i,j}$. At the moment
$\ct_{i,j}$ is generated, conditioned on the preceding transcript and on
$\widetilde m_{i,j}$, it is an honest encryption under $x_i$.
Moreover, $\Gamma_1$ never receives the witness copies. Hence correctness
gives
\[
    \Pr[a_{i,j}\neq\widetilde m_{i,j}]
    \le\nu(n),
\]
even after conditioning on the preceding transcript. A union bound over the
$s\ell$ queries and the coupling observation proves \cref{eq:claim1}.

\smallskip
\noindent\textbf{Claim 2.}
\begin{equation}\label{eq:claim2}
\left|
\Pr[\DecNo^{\SampNo}_{\Gamma,S^*}(1^n)=1]
-
\Ex_{x\leftarrow\bLb_n^{\otimes s}}[\iota(x)]
\right|
\le
2s\ell\gamma+\negl(n).
\end{equation}

By \cref{lem:decoff} and a union bound over $i\in[s]$, all coordinates of
$x\leftarrow\bLb_n^{\otimes s}$ are good except with negligible probability.
Condition on an all-good tuple. At the moment $\ct_{i,j}$ is generated,
conditioned on the preceding transcript and its plaintext
$\widetilde m_{i,j}$, it is an honest encryption under $x_i$.
Therefore, by \cref{lem:permsg},
\[
    \Pr[
    S^*(x_i,\ct_{i,j})\neq\widetilde m_{i,j}
    \mid \text{preceding transcript}
    ]
    \le
    2\Delta(x_i)
    \le
    2\gamma.
\]
A union bound over the $s\ell$ queries and the coupling observation proves
\cref{eq:claim2}.

\smallskip
\noindent\textbf{Claim 3.}
\begin{equation}\label{eq:claim3}
\left|
\Ex_{x\leftarrow\bL_n^{\otimes s}}[\iota(x)]
-
\Ex_{x\leftarrow\bLb_n^{\otimes s}}[\iota(x)]
\right|
\le
\frac{\epsilon}{16}+\negl(n).
\end{equation}

Let $\widetilde{\Gamma}$ be the estimator defined above. On every all-good
tuple, \cref{lem:hidden,lem:est} imply, for every
$O\in\mathcal{O}_{\QCIP[2]}$,
\begin{align}
\left|
\Pr[\widetilde{\Gamma}^{O}(x)=1]-\iota(x)
\right|
&\le
2\eta
+
4sK\gamma
+
\frac{s\ell}{K}
=: \beta.
\label{eq:esterr}
\end{align}
By \cref{eq:parameters},
\[
    2\eta=\frac{\epsilon}{128s},
    \qquad
    4sK\gamma=\frac{\epsilon}{128s},
\]
and
\[
    \frac{s\ell}{K}
    \le
    \frac{\epsilon}{256s}.
\]
Therefore
\[
    \beta
    \le
    \frac{5\epsilon}{256s}
    \le
    \frac{\epsilon}{32s}.
\]

Apply condition~4 of \cref{def:gap} to the $\QPT$ distinguisher
$\widetilde{\Gamma}$. There exists an oracle
$O\in\mathcal{O}_{\QCIP[2]}$ such that
\begin{equation}\label{eq:gap-estimator}
\left|
\Pr_{x\leftarrow\bL_n^{\otimes s}}
[\widetilde{\Gamma}^{O}(x)=1]
-
\Pr_{x\leftarrow\bLb_n^{\otimes s}}
[\widetilde{\Gamma}^{O}(x)=1]
\right|
\le
\negl(n).
\end{equation}
For this same oracle, \cref{eq:esterr} holds on every all-good tuple.
All-good tuples have all but negligible probability under both distributions.
Therefore,
\[
\left|
\Pr_{x\leftarrow\bL_n^{\otimes s}}
[\widetilde{\Gamma}^{O}(x)=1]
-
\Ex_{x\leftarrow\bL_n^{\otimes s}}[\iota(x)]
\right|
\le
\beta+\negl(n),
\]
and the same bound holds with $\bL_n^{\otimes s}$ replaced by
$\bLb_n^{\otimes s}$. Combining these two bounds with
\cref{eq:gap-estimator} gives
\[
\left|
\Ex_{x\leftarrow\bL_n^{\otimes s}}[\iota(x)]
-
\Ex_{x\leftarrow\bLb_n^{\otimes s}}[\iota(x)]
\right|
\le
2\beta+\negl(n)
\le
\frac{\epsilon}{16}+\negl(n),
\]
which proves \cref{eq:claim3}.

\smallskip
Combining \cref{eq:claim1,eq:claim2,eq:claim3} by the triangle inequality,
and using
\[
    2s\ell\gamma
    =
    \frac{\epsilon\ell}{256sK}
    \le
    \frac{\epsilon}{256},
\]
we obtain
\begin{align}
&\left|
\Pr[\DecYes^{\SampYes}_{\Gamma,D^*}(1^n)=1]
-
\Pr[\DecNo^{\SampNo}_{\Gamma,S^*}(1^n)=1]
\right|\\
&\qquad\le
s\ell\,\nu(n)
+
\frac{\epsilon}{256}
+
\frac{\epsilon}{16}
+
\negl(n).
\end{align}
Since $\epsilon$ is inverse polynomial while $s\ell\,\nu(n)$ and the
remaining error are negligible, the right-hand side is at most
$\epsilon(n)$ for all sufficiently large $n$. This proves
\cref{thm:main}.
\qed \end{proof}

\section{Separation}

We now use the results of Section~\ref{sec:main} to show that oracle access to
the inefficient \textsf{WE} adversary can be simulated by an efficient
algorithm.

\begin{theorem}\label{thm:sep}
Let $\epsilon=\epsilon(n)$ be any positive inverse polynomial. Let $L$ be a
$\QMA$ language with verification algorithm $Q$ and a $\Gap$ gap problem, and
let $\Pi=(\WEEnc,\WEDec)$ be a \textsf{WE} scheme for $R_{L,Q}$. Then there exists an algorithm
$\overline{P}$ satisfying the following conditions:
\begin{itemize}
\item $\overline{P}$ is a $\Pi$-adversary with advantage $1/4$.

\item Oracle access to $\overline{P}$ is efficiently simulatable. In
particular, for every falsifiable assumption $(C, c)$ and every
restricted \textsf{WE}--reduction $\Sigma$ showing the security of $\Pi$ from
$(C, c)$, there exists a QPT algorithm $P$ such that
\begin{equation}\label{eq:sim}
\Big|
\Pr[\Gamma^{\overline{P}}(1^n)=1]
-
\Pr[\Gamma^{P}(1^n)=1]
\Big|
\le
\epsilon(n)
\end{equation}
for all sufficiently large $n$, where
$\Gamma^{\overline{P}}(1^n) := \langle C(1^n), \Sigma^{\overline{P}}(1^n)\rangle$
and
$\Gamma^{P}(1^n) := \langle C(1^n), \Sigma^{P}(1^n)\rangle$.
\end{itemize}
\end{theorem}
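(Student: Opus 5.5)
I would take $\overline P=(\overline P_1,\overline P_2)$ with $\overline P_1(1^n)=\SampNo(1^n)$ and $\overline P_2(x,\ct)=S^*(x,\ct)$, the Bayes decryptor of \cref{def:bayes}. The efficient simulator is $P=(P_1,P_2)$. Here $P_1$ runs $\SampYes(1^n,1^{t\ell})$, outputs $x$, and keeps the witness copies in its internal state. $P_2(x,\ct)$ runs $\WEDec(x,\ct,|w_x\rangle^{\otimes t})$ on fresh copies. Note that $P$ depends on $\Sigma$ only through $\ell$, which is allowed by the quantifier order in the statement. Since $\Sigma$ makes at most $\ell$ second-stage queries per instance, $t\ell$ copies suffice. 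The two components share state as permitted in the notation section, and $P$ is $\QPT$.

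\textbf{Advantage.} The win probability of $\overline P$ in $\WESec$ is $\Ex_{x\leftarrow\bLb_n}[1-\Delta(x)]$, because $x\notin L$ always holds. Apply \cref{lem:decoff} with $\gamma=1/8$: except with negligible probability over $x$, we have $\Delta(x)<1/8$. Hence the win probability is at least $7/8-\negl(n)\ge 1/2+1/4$ for large $n$. So $\overline P$ is a $\Pi$-adversary with advantage $1/4$; in fact the bound holds for all large $n$, which is more than required.

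\textbf{Simulation.} By \cref{lem:decomposition}, $\Pr[\Gamma^{\overline P}(1^n)=1]$ equals the output probability of the phased execution with $\overline P$. In Phase 1 this execution samples $s$ independent draws of $\SampNo$, and in Phase 3 it answers with $S^*$, so it is exactly $\DecNo^{\SampNo}_{\Gamma,S^*}(1^n)$. For $P$, the same decomposition identifies $\Gamma^P$ with $\DecYes^{\SampYes}_{\Gamma,\WEDec}(1^n)$.

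The one point needing care is that \cref{lem:decomposition} is stated for a $\overline P_2$ whose responses depend only on the query and independent randomness. Our $P_2$ instead consumes witness copies from a state shared with $P_1$. This is harmless for three reasons. The copies are a product state. Each decryption call uses fresh copies, so the responses are conditionally independent given the queries. Finally, $\Sigma$ never sees the state, so moving the queries in time does not change the joint distribution. I would re-run the argument of \cref{lem:decomposition} verbatim with this observation.

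Having made both identifications, \cref{thm:main} with $D^*=\WEDec$ and the given $\epsilon$ yields \cref{eq:sim} for all sufficiently large $n$.

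\textbf{Main obstacle.} I expect the main obstacle to be this bookkeeping that matches the stateful efficient $P$ with $\DecYes$, in particular confirming that the number of witness copies is polynomial and fixed in advance. The rest is a direct invocation of \cref{lem:decoff} and \cref{thm:main}.
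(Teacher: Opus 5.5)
Your proposal is correct and matches the paper's proof. It uses the same $\overline P=(\SampNo,S^*)$ and the same stateful $P$ built from $\SampYes(1^n,1^{t\ell})$ and $\WEDec$, and it obtains the advantage bound from \cref{lem:decoff}; the paper takes $\gamma=1/n$ where you take the constant $\gamma=1/8$, and both work because the lemma's proof goes through for constant $\gamma$. It then identifies $\Gamma^{\overline P}$ and $\Gamma^{P}$ with $\DecNo^{\SampNo}_{\Gamma,S^*}$ and $\DecYes^{\SampYes}_{\Gamma,\WEDec}$ via \cref{lem:decomposition}, followed by \cref{thm:main}.

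Your explicit check that the stateful $P$ still fits \cref{lem:decomposition} is more careful than the paper, which asserts the $\DecYes$ identification only by analogy. The essential point there is that $C$ and $\Sigma$ never act on the witness register, so only the timing of the sampling and decryption steps differs.
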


\begin{proof}
Let $\overline{P}=(\overline{P}_1,\overline{P}_2)$ be the following
possibly inefficient algorithm. On input $1^n$, $\overline{P}_1$ samples
\[
x\leftarrow\SampNo(1^n)
\]
and outputs $x$. On input $(x,\ct)$, $\overline{P}_2$ outputs
\[
m\leftarrow S^*(x,\ct),
\]
where $S^*$ is the classical function of \cref{def:bayes}. Note that
$\overline{P}$ depends neither on $\Sigma$ nor on $\epsilon$.

Now fix any restricted \textsf{WE}--reduction $\Sigma$. Let
$P=(P_1,P_2)$ be the following $\QPT$ algorithm. On each invocation with
input $1^n$, $P_1$ samples
\[
(x,|w_x\rangle^{\otimes t\ell})
\leftarrow
\SampYes(1^n,1^{t\ell}),
\]
outputs $x$, and keeps the witness copies as its private state, where $t$ is the number of witness copies used in decryption. 

On input $(x,\ct)$, $P_2$ computes
\[
m\leftarrow \WEDec(x,\ct,|w_x\rangle^{\otimes t}),
\]
consuming $t$ witness copies for each query.
Since a restricted \textsf{WE}--reduction makes at most $\ell$ ciphertext
queries for each instance, the $t\ell$ witness copies are sufficient.

\smallskip
\noindent\textbf{$\overline{P}$ is a $\Pi$-adversary.}
Consider the security game
$\WESec_{\Pi,\overline{P}}(1^n)$. A bit $b\leftarrow\{0,1\}$ is sampled,
$\overline{P}_1$ samples an instance $x\leftarrow\SampNo(1^n)$, and the challenger computes
\[
\ct\leftarrow\WEEnc(1^n,x,b).
\]
Then $\overline{P}_2$ outputs $b'\leftarrow S^*(x,\ct).$

The game outputs $1$ exactly when $b' = b$, since $\mathsf{SampNo}$
outputs $x \in \overline{L}_n$ with probability $1$. By the definition of $S^*$, this happens with probability $1-\Delta(x)$. Therefore,
\begin{equation}
\Pr\big[\WESec_{\Pi,\overline{P}}(1^n)=1\big]
=
1-
\Ex_{x\leftarrow\SampNo(1^n)}
\big[\Delta(x)\big].
\end{equation}

Apply \cref{lem:decoff} with $\gamma(n):=1/n$. Since
$\Delta(x)\le 1/2$ by \cref{eq:bayes-error-alt},
\begin{align}
\Ex_{x\leftarrow\SampNo(1^n)}[\Delta(x)]
&\le
\frac{1}{n}
+
\Pr_{x\leftarrow\SampNo(1^n)}
\left[\Delta(x)\ge\frac{1}{n}\right]  \\
&\le
\frac{1}{n}+\negl(n).
\end{align}
Hence,
\[
\Pr\big[\WESec_{\Pi,\overline{P}}(1^n)=1\big]
\ge
1-\frac1n-\negl(n)
\ge
\frac34
\]
for all sufficiently large $n$. Thus, $\overline{P}$ is a $\Pi$-adversary
with advantage $1/4$ in the sense of \cref{def:adversary}.

\smallskip
\noindent\textbf{Oracle access is simulatable.}
Fix a $\QPT$ challenger $C$ and reduction $\Sigma$, and let
$\Gamma$ be the composition.

By \cref{lem:decomposition}, the output of the phased execution of
\cref{def:decomposition} with the adversary $\overline{P}$ is
distributed identically to that of $\Gamma^{\overline{P}}(1^n)$. This
phased execution is precisely
$\mathsf{DecNo}^{\mathsf{SampNo}}_{\Gamma, S^*}(1^n)$. Indeed, the first stage of $\overline{P}$ samples its instances using
$\SampNo$, while its second stage answers every ciphertext using $S^*$.

Similarly, $\Gamma^{P}(1^n)$ is distributed identically  to  the yes-based experiment $\DecYes^{\SampYes}_{\Gamma,D}(1^n).$ The first stage of $P$ samples each instance together with sufficiently many
witness copies using $\SampYes$, and its second stage answers each ciphertext
using $\WEDec$ with fresh witness copies.

Therefore, \eqref{eq:sim} is exactly the conclusion of
\cref{thm:main}, and the claim follows.
\qed \end{proof}

\section{Impossibility Result}

We are now ready to prove \cref{thm:informalmain}, showing the impossibility
of restricted quantum black-box reductions from \textsf{WE} to falsifiable
cryptographic assumptions.

\begin{theorem}\label{thm:imposs}
Let $L$ be a language in $\QMA$ with verification algorithm $Q$ and relation
$R_{L,Q}$. Assume that $L$ has a $\Gap$ gap problem. Let
$\Pi=(\WEEnc,\WEDec)$ be a \textsf{WE} scheme for $R_{L,Q}$. Then, for any
falsifiable cryptographic assumption $(C,c)$, one of the following statements
holds:
\begin{enumerate}
  \item The assumption $(C,c)$ is false.
  \item There is no restricted \textsf{WE}--reduction showing the security of
  $\Pi$ from $(C,c)$.
\end{enumerate}
\end{theorem}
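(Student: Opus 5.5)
The argument is a direct application of the simulatable-adversary paradigm, with \cref{thm:sep} supplying the simulatable adversary. Suppose that the assumption $(C,c)$ is true and that a restricted \textsf{WE}--reduction $\Sigma$ showing the security of $\Pi$ from $(C,c)$ exists. We will derive a contradiction.

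First, take the unbounded adversary $\overline{P}$ from \cref{thm:sep}. It is a $\Pi$-adversary with advantage $1/4$, and it is fixed independently of $\Sigma$ and of $\epsilon$. Apply \cref{def:reduction} with the constant polynomial $p=4$. This gives a polynomial $p'$ such that
\[
\Pr\big[\Gamma^{\overline{P}}(1^n)=1\big]\ge c+\frac{1}{p'(n)}
\]
for infinitely many $n$.

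Second, apply \cref{thm:sep} again, now with $\epsilon(n):=1/(2p'(n))$, and let $P$ be the resulting QPT simulator for this $\Sigma$ and $C$. Then for all sufficiently large $n$,
\[
\big|\Pr[\Gamma^{\overline{P}}(1^n)=1]-\Pr[\Gamma^{P}(1^n)=1]\big|\le\frac{1}{2p'(n)}.
\]
Combining the two inequalities, $\Pr[\Gamma^{P}(1^n)=1]\ge c+1/(2p'(n))$ holds for infinitely many $n$. Here we intersect an infinite set of $n$ with a cofinite one, so the set of good $n$ remains infinite.

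Third, observe that $A:=\Sigma^{P}$ is a QPT algorithm. $\Sigma$ is QPT, and it makes only polynomially many ($s$ first-stage and $s\ell$ second-stage) classical queries. Each query to $P_1$ runs $\SampYes(1^n,1^{t\ell})$, and each query to $P_2$ runs $\WEDec$, so every query is answered in polynomial time. The witness copies are simply kept in $P$'s internal state. Moreover, $\langle C(1^n),A(1^n)\rangle$ is exactly $\Gamma^{P}(1^n)$. Hence $A$ wins the falsifiable game with probability at least $c+1/(2p'(n))$ infinitely often, which is not bounded by $c+\negl(n)$. This contradicts the truth of $(C,c)$ under \cref{def:falsifiable}. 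Therefore either the assumption is false or no such reduction exists.

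The main subtlety is the order of quantifiers. $\overline{P}$ must be fixed before $p'$ is obtained, and only then may $\epsilon$ be chosen depending on $p'$. This ordering is legitimate because \cref{thm:sep} guarantees that $\overline{P}$ depends neither on $\Sigma$ nor on $\epsilon$. A second point to check is that ``infinitely often'' in \cref{eq:redbreaks} combines correctly with ``for all sufficiently large $n$'' in \eqref{eq:sim}; as noted above, it does. Finally, the falsifiable definition is stated against uniform QPT adversaries, so we note that $P$ uses no advice.
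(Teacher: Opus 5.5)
Your proof is correct and follows the paper's argument essentially verbatim. Like the paper, you apply \cref{def:reduction} with $p\equiv 4$ to obtain $p'$, invoke \cref{thm:sep} with $\epsilon=1/(2p')$, intersect the infinitely-many and all-sufficiently-large sets of $n$, and conclude that the QPT algorithm $\Sigma^{P}$ breaks $(C,c)$. The only cosmetic difference is that the paper handles your quantifier-order concern more simply: in \cref{def:reduction}, $p'$ is fixed before any $\Pi$-adversary of advantage $1/4$ is chosen and works for all of them, so the paper picks $\epsilon$ from $p'$ and calls \cref{thm:sep} once. It therefore never needs the fact, noted only inside the proof of \cref{thm:sep}, that $\overline{P}$ is independent of $\epsilon$.
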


\begin{proof}
Suppose that statement~2 is false. Then there exists a restricted
\textsf{WE}--reduction $\Sigma$ establishing the security of $\Pi$ from
$(C,c)$.

Apply \cref{def:reduction} for black-box reduction with the constant polynomial $p\equiv 4$.
There exists a polynomial $p'$ such that, for every $\Pi$-adversary
$\overline{P}$ with advantage $1/4$,
\begin{equation}\label{eq:redwins}
  \Pr\big[
  1\gets\langle C(1^n),\Sigma^{\overline{P}}(1^n)\rangle
  \big]
  \ge
  c+\frac{1}{p'(n)}
\end{equation}
for infinitely many $n\in\mathbb{N}$. Note that $p'$ is fixed independently
of the choice of $\overline{P}$.

Set
\[
  \epsilon(n):=\frac{1}{2p'(n)}.
\]
By \cref{thm:sep}, there exists an inefficient $\Pi$-adversary
$\overline{P}$ with advantage $1/4$ such that oracle access to
$\overline{P}$ can be simulated efficiently. In particular, for the
restricted reduction $\Sigma$ and challenger $C$, there exists a $\QPT$
algorithm $P$ such that
\begin{equation}\label{eq:imposs-sim}
  \Big|
  \Pr[\Gamma^{\overline{P}}(1^n)=1]
  -
  \Pr[\Gamma^{P}(1^n)=1]
  \Big|
  \le
  \epsilon(n)
\end{equation}
for all sufficiently large $n$, where
$\Gamma$ denotes the composed interaction between $C$ and
$\Sigma$.

Hence, combining \eqref{eq:redwins} and
\eqref{eq:imposs-sim}, we obtain, for infinitely many $n$,
\begin{align}
  \Pr\big[
  1\gets\langle C(1^n),\Sigma^{P}(1^n)\rangle
  \big]
  &\ge
  c+\frac{1}{p'(n)}-\epsilon(n)\\
  &=
  c+\frac{1}{2p'(n)}.
\end{align}

Since both $P$ and $\Sigma$ are $\QPT$, the composed algorithm
$\Sigma^P$ is also $\QPT$. Thus, $\Sigma^P$ breaks the falsifiable
assumption $(C,c)$ with non-negligible advantage. Therefore the assumption
is false, proving statement~1.
\qed \end{proof}

\textbf{Acknowledgements.}
TM is supported by
JST CREST JPMJCR23I3,
JST Moonshot R\verb|&|D JPMJMS2061-5-1-1, 
JST FOREST, 
MEXT QLEAP, 
the Grant-in Aid for Transformative Research Areas (A) 21H05183,
and 
the Grant-in-Aid for Scientific Research (A) No.22H00522.

\textbf{AI Disclosure.}
We used 
Claude Fable 5 and ChatGPT 5.6 to assist with writing proofs and texts.
The authors verified the correctness and originality of all content including references.

\newpage

\printbibliography

\newpage
\appendix

\section{Separating Indistinguishability Obfuscation}\label{sec:extensions}

We formalize \cref{cor:informalqio}, ruling out black-box
reductions for indistinguishability obfuscation of null quantum circuits to falsifiable assumptions. This is a direct consequence of \cref{thm:imposs} and \cref{thm:bm}. 

We first record the structural form of the construction of~\cite{BM21}, which builds \textsf{WE} for $\QMA$ from null--\textsf{qIO}.

\begin{proposition}\label{prop:construction}
Let $\mathcal{O}=(\mathsf{Obf},\mathsf{Eval})$ be a null--\textsf{qIO} scheme
(\cref{def:nulliO}) and let $L\in\mathsf{QMA}$ with verification algorithm $Q$ and
relation $R_{L,Q}$. The \textsf{WE} scheme for $L$ $\Pi[\mathcal{O}]=(\mathsf{WE.Enc},
\mathsf{WE.Dec})$ of \cref{thm:bm} has the following form: there is a polynomial
$t=t(n)$ and a deterministic polynomial-time map $x\mapsto(Q^{x}_{0},Q^{x}_{1})$
sending an instance to descriptions of two quantum circuits, such that
\begin{enumerate}
  \item $\mathsf{WE.Enc}(1^n,x,b)=\mathsf{Obf}(1^n,Q^{x}_{b})$ for
    $b\in\{0,1\}$;
  \item $\mathsf{WE.Dec}(x,\mathsf{ct},\rho)=\mathsf{Eval}(\mathsf{ct},\rho)$;
  \item $Q^{x}_{0}$ is the null circuit for every $x$, and $Q^{x}_{1}$ is a null circuit whenever $x\notin L$.
  \item for every $x\in L$ with witness $\lvert w_x\rangle$,
    $\Pr[Q^{x}_{1}(\lvert w_x\rangle^{\otimes t})=1]\ge 1-\mathsf{negl}(n)$.
\end{enumerate}
\end{proposition}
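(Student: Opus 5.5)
This proposition records the structure of the Bartusek--Malavolta construction behind \cref{thm:bm}. The plan is therefore to recall that construction explicitly and check the four items one by one, rather than to prove anything new.

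\textbf{Defining the circuits.} Fix the $\QMA$ verifier $Q$ for $L$, with witness length $p(n)$. First amplify $Q$ into $Q'$: run $Q$ on $t=t(n)$ independent witness registers and accept iff a majority accept. For $x\in L$ and $|w_x\rangle$ with acceptance probability at least $2/3$, Chernoff gives $\Pr[Q'(x,|w_x\rangle^{\otimes t})=1]\ge 1-2^{-\Omega(t)}$. For $x\notin L$, every state on the $t$ registers, including entangled ones, is accepted with probability $2^{-\Omega(t)}$; this is the standard Marriott--Watrous strong error reduction. Choosing $t=n$ makes both errors negligible. Then define $Q^x_1$ as the circuit with $x$ hardwired that runs $Q'(x,\cdot)$ on its input register. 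Define $Q^x_0$ as a padded circuit of the same size and input/output spaces that ignores its input and outputs $0$. The map $x\mapsto(Q^x_0,Q^x_1)$ is deterministic and polynomial-time because it only substitutes $x$ into a fixed template and pads.

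\textbf{Checking the items.} Items 1 and 2 hold by definition: encryption obfuscates $Q^x_b$, and decryption evaluates the ciphertext on the witness copies. Item 3 holds because $Q^x_0$ outputs $0$ always, and for $x\notin L$, soundness of $Q'$ bounds acceptance of every input state by a negligible function. Item 4 is completeness of $Q'$. Finally, I will check consistency with \cref{def:weCorrect}. By item 4, $Q^x_b$ on $|w_x\rangle^{\otimes t}$ outputs $b$ with probability $1-\negl$. \Cref{def:qioCorrect} then lets $\mathsf{Eval}$ use $k$ copies of that input, so the \textsf{WE} uses $t\cdot k$ witness copies in total. Null-\textsf{qIO} security applied to the two null circuits $Q^x_0,Q^x_1$ gives \textsf{WE} security.

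\textbf{Main obstacle.} The one delicate point is item 3. The null-\textsf{qIO} definition requires negligible acceptance on \emph{every} input state, including states entangled across the $t$ witness copies. Plain parallel repetition with independent-copy analysis is not enough for this. I would invoke the Marriott--Watrous in-place amplification, or the fact that parallel-repetition soundness of a $\QMA$ verifier holds for arbitrary entangled witnesses, to get a negligible bound uniformly over all inputs. The remaining concern is matching circuit sizes, which padding handles.
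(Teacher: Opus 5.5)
Your proposal is correct and follows the paper's route: $Q^x_1$ is the verifier amplified by parallel repetition with a threshold, $Q^x_0$ is the padded always-reject circuit, items 1--4 follow directly, and \textsf{WE} correctness and security are inherited from null--\textsf{qIO}. You also treat inputs entangled across the $t$ registers explicitly, which the paper only asserts; the fact you need there is the standard sequential-conditioning argument for parallel repetition (each copy's conditional acceptance probability is at most $1/3$), whereas Marriott--Watrous is a single-copy technique and would correspond to a different choice of $Q^x_1$.
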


Here $Q^{x}_{1}$ is the soundness-amplified verifier $Q(x,\cdot)$: running $Q$
in parallel on independent witness copies and thresholding drives the $\QMA$
soundness error below any negligible function, so that $Q^{x}_{1}$ rejects all
inputs except with negligible probability when $x\notin L$ (item~3), while still
accepting the multi-copy witness when $x\in L$ (item~4). The circuits are padded to have the same size and input/output lengths. Items~1--4 recover the
correctness (\cref{def:weCorrect}) and security (\cref{def:weSec}) of $\Pi[\mathcal{O}]$:
decryption evaluates the obfuscation on the witness, and for $x\notin L$ both
$Q^{x}_{0},Q^{x}_{1}$ are null, so the two obfuscations are indistinguishable by
null--\textsf{qIO} security.

The security notion for null--\textsf{qIO} in \cite{BM21} (\cref{def:null-qio-security}) is defined relative to a non-interactive non-uniform adversary. We present a weaker definition here mirroring \cref{def:weSec}. As in the \textsf{WE} game, the adversary is permitted to
choose the challenge itself (here, the pair of null circuits) and we only consider uniform adversaries. Both changes strengthen the resulting separation. Note that it is straightforward to show that the \textsf{WE} construction from null--\textsf{qIO} holds relative to this weakened security notion.  

\begin{definition}[Null--\textsf{qIO} security game]\label{def:qio-game}
Let $\mathcal{O}=(\mathsf{Obf},\mathsf{Eval})$ be a null--\textsf{qIO} scheme and
$A=(A_1,A_2)$ be a pair of quantum algorithms.

\medskip
\noindent$\mathsf{qIO.Sec}_{\mathcal{O},A}(1^{n})$:
\begin{enumerate}
  \item $b\leftarrow\{0,1\}$.
  \item $(Q_0,Q_1,z)\leftarrow A_1(1^{n})$, where $z$ is some classical string.
  \item If the circuits $(Q_0,Q_1)$ are invalid (do not have the same size and the same input and output spaces) output $0$. Otherwise, compute $\widetilde Q\leftarrow\mathsf{Obf}(1^{n},Q_b)$.
  \item $b'\leftarrow A_2(Q_0,Q_1,z,\widetilde Q)$.
  \item Output $1$ if $Q_0$ and $Q_1$ are both null circuits and $b'=b$;
    otherwise output $0$.
\end{enumerate}
\end{definition}
\begin{remark}
    Note that in \cref{def:qio-game} the first phase of the adversary $A_1$ can pass a classical string $z$ to the second phase $A_2$. This is necessary to mirror the \textsf{WE} game where the instance $x$ is sent from the first phase to the second phase. Note that \cref{def:qio-game} is still weaker than the standard security notion (\cref{def:null-qio-security}), which allows for quantum advice. 
\end{remark}

\begin{definition}[\textsf{qIO}--adversary]\label{def:qio-adv}
A quantum algorithm $A=(A_1,A_2)$ is a \emph{\textsf{qIO}--adversary} for $\mathcal{O}$ if
there is a polynomial $p$ such that
$\Pr[\mathsf{qIO.Sec}_{\mathcal{O},A}(1^{n})=1]\ge \tfrac12+\tfrac{1}{p(n)}$
for infinitely many $n$.
\end{definition}

\begin{definition}[Null--\textsf{qIO}--reduction]\label{def:qio-red}
Let $(C,c)$ be a falsifiable assumption. A quantum black-box reduction showing the
security of $\mathcal{O}$ from $(C,c)$ is a $\QPT$ algorithm $\Sigma^{(\cdot)}$ such
that for every polynomial $p$ there is a polynomial $p'$ with the following
property: for every \textsf{qIO}--adversary $A$ of advantage $1/p$, the algorithm
$\Sigma^{A}$ breaks $(C,c)$ with advantage $1/p'$ i.e.
\begin{equation}
  \Pr\big[1\leftarrow\langle C(1^n),\Sigma^{A}(1^n)\rangle\big]\ \ge\ c+\frac{1}{p'(n)},
\end{equation}
for infinitely many $n\in\mathbb{N}$. 
For short, we call such a reduction a \emph{\textsf{null--qIO}--reduction}.
\end{definition}

Our separation applies to reductions with the following 
structure.
\begin{definition}[Restricted null--\textsf{qIO} reduction]
\label{def:restricted-qio-reduction}
Let $\mathcal{O}=(\mathsf{Obf},\mathsf{Eval})$ be a null--\textsf{qIO}
scheme, let $(C,c)$ be a falsifiable cryptographic assumption, and let
$\Sigma^{(\cdot)}$ be a null--\textsf{qIO} reduction showing the security
of $\mathcal{O}$ from $(C,c)$.

We say that $\Sigma^{(\cdot)}$ is a \emph{restricted null--\textsf{qIO}
reduction} if there exist polynomials $s=s(n)$ and $\ell=\ell(n)$ such
that, in the interaction
\[
    \left\langle C(1^n),\Sigma^{A}(1^n)\right\rangle,
\]
the oracle access of $\Sigma$ to a possibly inefficient
null--\textsf{qIO} adversary $A=(A_1,A_2)$ has the following form.

\begin{enumerate}
    \item
    The challenger $C$ and the reduction $\Sigma$ may interact
    arbitrarily throughout the execution. Only $\Sigma$ has oracle
    access to $A$.

    \item
    For every $i\in[s]$, the reduction submits $1^n$ to $A_1$ and
    receives a syntactically valid pair of quantum-circuit descriptions
    \[
        (Q_i^0,Q_i^1,z_i)\leftarrow A_1(1^n).
    \]
    where $z_i$ is a polynomial-length classical string and
$Q_i^0,Q_i^1$ are syntactically valid circuit descriptions having the
same size and the same input and output spaces. 

    \item
    For every $i\in[s]$ and $j\in[\ell]$, a bit
    $b_{i,j}\gets\{0,1\}$ is sampled, and
    \[
        \widetilde Q_{i,j}
        \leftarrow
        \mathsf{Obf}\!\left(1^n,Q_i^{b_{i,j}}\right)
    \]
    is generated honestly. This computation may be performed within $\Sigma$, within $C$,
    or jointly through their interaction. 

    \item
    For every $i\in[s]$ and $j\in[\ell]$, the reduction
    submits $(Q_i^0,Q_i^1,z_i,\widetilde Q_{i,j})$,    and receives
    \[
        \widetilde b_{i,j}
        \leftarrow
        A_2\!\left(
            Q_i^0,Q_i^1,z_i,\widetilde Q_{i,j}
        \right).
    \]
    \item The execution ends when $C$ outputs its
    decision bit.
\end{enumerate}

Thus, the oracle queries are parameter-preserving, classical,
non-adaptive, and made only on
honestly generated obfuscations.
\end{definition}

The following directly follows from \cref{prop:construction} and \cref{thm:imposs}.

\begin{theorem}\label{thm:qio-sep}
Let $L$ be a language in $\mathsf{QMA}$ admitting a $\Gap$ gap problem, and let $\mathcal{O}=(\mathsf{Obf},\mathsf{Eval})$ be a null--\textsf{qIO}
scheme. Then for every falsifiable assumption $(C,c)$, one of the following
holds:
\begin{enumerate}
  \item the assumption $(C,c)$ is false; or
  \item there is no restricted null--\textsf{qIO}--reduction showing the security of $\mathcal{O}$ from $(C,c)$.
\end{enumerate}
\end{theorem}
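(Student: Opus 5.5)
The plan is to reduce \cref{thm:qio-sep} to \cref{thm:imposs} by contraposition. Suppose $(C,c)$ is true and $\Sigma_{\mathrm{qIO}}$ is a restricted null--\textsf{qIO}--reduction showing the security of $\mathcal{O}$ from $(C,c)$. We build from it a restricted \textsf{WE}--reduction $\Sigma_{\mathrm{WE}}$ showing the security of $\Pi[\mathcal{O}]$ (the scheme of \cref{prop:construction}) from the same $(C,c)$. This contradicts \cref{thm:imposs}, applied to $L$ with its $\Gap$ gap problem.

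\textbf{Adversary translation.} Given any \textsf{WE} adversary $\overline{P}=(\overline{P}_1,\overline{P}_2)$ for $\Pi[\mathcal{O}]$ with advantage $1/p$, define a \textsf{qIO}--adversary $A=(A_1,A_2)$ as follows. $A_1(1^n)$ runs $x\leftarrow\overline{P}_1(1^n)$ and outputs $(Q^x_0,Q^x_1,z:=x)$. $A_2(Q_0,Q_1,z,\widetilde Q)$ outputs $\overline{P}_2(z,\widetilde Q)$. By item~1 of \cref{prop:construction}, $\widetilde Q\leftarrow\mathsf{Obf}(1^n,Q^x_b)$ is distributed exactly as $\WEEnc(1^n,x,b)$. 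By item~3, if $x\notin L$ then both circuits are null. Hence every winning run of $\WESec$ is a winning run of $\mathsf{qIO.Sec}$, so
\[
\Pr[\mathsf{qIO.Sec}_{\mathcal{O},A}(1^n)=1]\ge\Pr[\WESec_{\Pi[\mathcal{O}],\overline{P}}(1^n)=1].
\]
Thus $A$ is a \textsf{qIO}--adversary with advantage $1/p$, possibly inefficient, which \cref{def:qio-red} allows. The syntactic validity required in step~2 of \cref{def:restricted-qio-reduction} holds by the padding noted after \cref{prop:construction}.

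\textbf{Reduction translation.} Define $\Sigma_{\mathrm{WE}}^{\overline{P}}:=\Sigma_{\mathrm{qIO}}^{A}$, where the oracle $A$ is implemented using oracle access to $\overline{P}$. A query $1^n$ to $A_1$ is answered by querying $\overline{P}_1$ for $x_i$ and computing $(Q^{x_i}_0,Q^{x_i}_1,x_i)$ in deterministic polynomial time. A query $(Q^0_i,Q^1_i,z_i,\widetilde Q_{i,j})$ to $A_2$ is answered by querying $\overline{P}_2(z_i,\widetilde Q_{i,j})$. We then check \cref{def:wered} item by item:
\begin{itemize}
\item there are exactly $s$ first-stage queries at parameter $n$;
\item the $\widetilde Q_{i,j}=\mathsf{Obf}(1^n,Q^{x_i}_{b_{i,j}})$ with uniform $b_{i,j}$ are, by item~1, honest ciphertexts $\WEEnc(1^n,x_i,b_{i,j})$;
\item the $\ell$ second-stage queries per instance are classical and non-adaptive.
\end{itemize}
Since $\Sigma_{\mathrm{qIO}}$ breaks $(C,c)$ with advantage $1/p'$ given $A$, so does $\Sigma_{\mathrm{WE}}$ given $\overline{P}$, with the same $p'$. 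Therefore $\Sigma_{\mathrm{WE}}$ is a restricted \textsf{WE}--reduction, and \cref{thm:imposs} yields the contradiction.

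\textbf{Main obstacle.} The delicate step is the interface bookkeeping. The oracle $A$ seen by $\Sigma_{\mathrm{qIO}}$ must be exactly a function of $\overline{P}$'s answers. This means the value $z_i=x_i$ passed through must let $A_2$ know which instance to use, and the \textsf{WE} reduction must forward $x_i$ rather than recompute it. A second point is that \cref{thm:imposs} quantifies over all \textsf{WE} adversaries, including the inefficient Bayes adversary of \cref{thm:sep}. The translation above must therefore work for arbitrary $\overline{P}$, which it does, since it only composes with efficient deterministic maps.
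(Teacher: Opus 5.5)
Your proposal is correct and is the argument the paper has in mind: the paper only remarks that \cref{thm:qio-sep} follows directly from \cref{prop:construction} and \cref{thm:imposs}. Your two translations make that implication precise: a \textsf{WE} adversary becomes a \textsf{qIO}--adversary by setting $z=x$, and a restricted null--\textsf{qIO}--reduction becomes a restricted \textsf{WE}--reduction for $\Pi[\mathcal{O}]$. One point worth stating explicitly is that $\Pi[\mathcal{O}]$ is a \emph{correct} \textsf{WE} scheme for $R_{L,Q}$, which follows from items~2 and~4 together with \textsf{qIO} correctness; \cref{thm:imposs} relies on this through \cref{lem:decoff} and Claim~1 in the proof of \cref{thm:main}.
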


\end{document}